\documentclass[11pt]{article}
\usepackage[margin=1in]{geometry}
\usepackage[T1]{fontenc}
\usepackage{lmodern,microtype,times}
\usepackage{amsmath,amssymb,amsthm,mathtools}
\usepackage{enumitem}
\usepackage{xcolor}
\usepackage[hidelinks]{hyperref}

\hypersetup{pdftitle={A sampling Lov\'{a}sz Local Lemma}}
\newtheorem{theorem}{Theorem}
\newtheorem{lemma}{Lemma}
\newtheorem{proposition}{Proposition}

\newcommand{\one}{\mathbf 1} 
\newcommand{\vbl}{\operatorname{vbl}}
\newcommand{\Pairs}{\mathcal P}
\newcommand{\TV}{d_{\mathrm{TV}}}
\title{A sampling Lov\'{a}sz Local Lemma}
\author{Dimitris Achlioptas\\
University of Athens\\
{\small{\texttt{optas@uoa.gr}}}} 
\date{}

\begin{document}
\maketitle
\begin{abstract}
We give an approximately uniform sampler for satisfying assignments of constraint satisfaction problems that satisfy $4\mathrm e p(\Delta+1)^2\le1$, where $p$ is the largest constraint-violation probability under the uniform product distribution, and $\Delta$ is the maximum degree of the dependency graph. The algorithm invokes the recent efficient approximate counting algorithm of Liu, Wang, Yin, Zhang, and Zhou~\cite{CountingLLL} as a subroutine and returns a satisfying assignment sampled within total-variation distance $\varepsilon$ of the uniform distribution in $(n+m/\varepsilon)^{O(k\Delta\log D)}$ time, where $n$ and $m$ are the numbers of variables and constraints, $D$ is the common domain size, and $k$ bounds the constraint arity. 
We also give an asymmetric product-form condition under which both
counting and sampling are efficient. To supply the counts needed in the
additional regime, we extend the marginal-expansion proof of Liu et al.
\end{abstract}

\section{Introduction}

Given an arbitrary constraint satisfaction problem (CSP), let $p$ be an upper bound on the fraction of violating value assignments of any constraint and assume that each constraint shares variables with at most $\Delta$ others.  Very recently, Liu, Wang, Yin, Zhang, and Zhou~\cite{CountingLLL}  gave efficient algorithms for approximately counting solutions whenever $4\mathrm e p(\Delta+1)^2\le1$, a condition we will call the Lov\'asz local lemma counting regime. They left open whether one can sample in the same
regime~\cite[Section~6]{CountingLLL}. We answer this question affirmatively. 

The difficulty in going from counting to sampling is that repeatedly setting variables according to their conditional marginals may well destroy the hypothesis enabling counting.
For example, a $k$-clause in a SAT instance has violation probability $2^{-k}$ initially, but after unfavorably fixing $k-1$ of its variables  the probability becomes $1/2$, while there is no guarantee that its interactions with other clauses have been reduced correspondingly.  We overcome this problem by invoking the counting algorithm of~\cite{CountingLLL}  only on instances obtained by deleting constraints of the original input instance, inheriting its guarantee $4\mathrm e p(\Delta+1)^2\le1$.

\subsection{Results}

A CSP $H$ is a set of $m\ge1$ constraints on $n$ variables, each taking values in $[D]=\{1,\ldots,D\}$. We assume $D\ge2$ and that each constraint is a nonconstant Boolean predicate on a nonempty scope of at most $k$ variables. The dependency graph $G$ has a vertex for each constraint and two constraints are adjacent  if and only if they share variables, i.e., their scopes intersect. For $j=1,2$, let $N_j(c)$ be the constraints at distance exactly $j$ from $c$ in $G$, and set $\Delta=\max_c|N_1(c)|$ and $\Delta_2=\max_c|N_2(c)|$. Throughout we assume $\Delta,\Delta_2\ge2$. Let $p$ be the maximum constraint-violation
probability under independent uniform variable values. For $K\subseteq H$, let $Z(K)$ be the number of
assignments to all $n$ variables satisfying $K$, and let $\mu_K$ be
the uniform distribution on these assignments when $Z(K)>0$.
We write $\mathcal L(X)$ for the law of a random variable $X$ and
$\TV(\mu,\nu)=\frac12\sum_x|\mu(x)-\nu(x)|$ for the total-variation
distance between probability measures $\mu,\nu$. 

\begin{theorem}[Sampling in the counting LLL regime]\label{thm:main}
There exists a randomized algorithm which given any CSP instance $H$ satisfying $4\mathrm e p(\Delta+1)^2\le1$ and any $0<\varepsilon\le1/2$,  returns a satisfying assignment $X$ of $H$ with $\TV(\mathcal L(X),\mu_H)\le\varepsilon$ after at most $(n+m/\varepsilon)^{O(k\Delta\log D)}$ steps.
\end{theorem}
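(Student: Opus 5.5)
The plan is to reduce sampling to counting on sub-instances $K\subseteq H$, i.e., instances obtained by deleting constraints. Every such $K$ still satisfies $4\mathrm e p(\Delta+1)^2\le1$, because deleting constraints can neither increase $p$ nor increase the maximum degree of the dependency graph. So the algorithm of~\cite{CountingLLL} returns a $(1\pm\eta)$-approximation of $Z(K)$ in time $(n+m/\eta)^{O(k\Delta\log D)}$ (taking its stated running time as given).

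Directly self-reducing $H$ by pinning variables fails, since pinned instances leave the regime. Instead I would write $\mu_H$ as a product of ratios over a chain of constraint sets. Fix an ordering $c_1,\dots,c_m$ and let $H_i=\{c_1,\dots,c_i\}$. Then
\[
\mu_H(x)=\frac{\one[x\models H]}{Z(H)}, \qquad \frac{Z(H)}{D^n}=\prod_{i=1}^m \frac{Z(H_i)}{Z(H_{i-1})},
\]
and each ratio $Z(H_i)/Z(H_{i-1})=\Pr_{\mu_{H_{i-1}}}[c_i\text{ satisfied}]\ge 1-\mathrm e p$ by the LLL lower bound. This suggests sampling by \emph{rejection along the chain}.

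First, draw $X_0$ uniformly from $[D]^n$. Then, given $X_{i-1}\sim\mu_{H_{i-1}}$, produce $X_i\sim\mu_{H_i}$. If $X_{i-1}$ satisfies $c_i$, it is already a valid sample from $\mu_{H_i}$; but rejection would restart everything, and the product of acceptance probabilities $\prod(1-\mathrm e p)$ may be exponentially small. So instead I would resample only the variables in $\vbl(c_i)$ (at most $k$ of them) from their exact conditional law under $\mu_{H_i}$ given the other variables. That alone does not give $\mu_{H_i}$, because the marginal of the outside variables changes. The correct step is to draw $Y=X_{\vbl(c_i)}$ jointly from its marginal under $\mu_{H_i}$, which has at most $D^k$ values, and then draw the rest conditionally.

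Concretely, I would sample variables one scope at a time. For each candidate partial assignment $\sigma$ on a set $S$ of at most $k$ variables, the marginal $\mu_{H}(X_S=\sigma)$ equals $Z(H\mid \sigma)/Z(H)$. The key trick is to express $Z(H\mid\sigma)$ without pinned instances. Pinning $\sigma$ is equivalent to deleting every constraint whose scope meets $S$ and multiplying by the probability, under $\mu$ of the reduced instance, that these deleted constraints hold given $\sigma$. Deleting constraints keeps us in the regime, and the leftover factor involves only constraints in the neighbourhood $N_1\cup N_2$ of $S$. I would estimate it by counting over the further-deleted instance with local brute-force enumeration over at most $D^{k\Delta}$ boundary configurations. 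That enumeration is the source of the $k\Delta\log D$ exponent. After fixing $\sigma$, I would iterate on the remaining variables, maintaining that all counting calls are made on $H$ minus constraints, with the fixed variables handled by enumeration inside each call rather than by pinning.

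To get total variation $\varepsilon$, each of the at most $n$ marginal steps is estimated within relative error $\eta=\varepsilon/(2n)$. Errors in ratios of counts compound multiplicatively, giving $\TV\le\sum \eta_i\le\varepsilon$, and the running time becomes $(n+m/\varepsilon)^{O(k\Delta\log D)}$.

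The main obstacle is the claim in the previous step: showing that conditioning on already-fixed values can always be rewritten as a bounded-size sum of counts of constraint-deleted sub-instances of $H$, with every term positive and well approximated. The danger is cancellation in the inclusion--exclusion, which would destroy relative-error guarantees. I would first try to avoid subtraction entirely, by making the fixed variables enter only through a local, explicitly enumerable factor that multiplies a positive count on a deleted sub-instance. If cancellation cannot be avoided, the fallback is to use the LLL lower bound $Z(K)/Z(K')\ge(1-\mathrm e p)^{|K\setminus K'|}$ to show the relevant differences are bounded below by a constant fraction, so that relative errors are amplified by at most a factor $(1-\mathrm e p)^{-O(\Delta)}$ per step.
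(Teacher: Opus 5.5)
Your starting observations match the paper's. Deletion subinstances inherit $4\mathrm e p(\Delta+1)^2\le1$, so the counter of~\cite{CountingLLL} applies to every $Z(K)$ with $K\subseteq H$. The chain $H_0\subseteq H_1\subseteq\cdots\subseteq H_m$ with a warm start from $\mu_{H_{i-1}}$ is also the backbone of the paper's sampler. The gap is the step you flag as the main obstacle, and the problem there is not cancellation. Your identity
\[
 Z(H\mid\sigma)=D^{-|S|}Z(H')\,\Pr_{X\sim\mu_{H'}}\bigl[X[S\leftarrow\sigma]\models K_S\bigr],
 \qquad H'=H\setminus K_S,
\]
with $K_S$ the constraints meeting $S$, is correct, but its last factor is not local. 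It averages over the law of $X_T$ under $\mu_{H'}$, where $T=\vbl(K_S)\setminus S$. Enumerating the $D^{|T|}$ values $\tau$ is therefore useless without the weights $\mu_{H'}(X_T=\tau)$. Those weights are again pinned marginals, now of a deletion instance and on a larger set. Applying your identity to them deletes the next shell of constraints and asks for a marginal on the next shell of variables, and the recursion ends only when the whole instance is consumed.

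Inclusion--exclusion over $K_S$ does not help. Its terms are probabilities involving the restricted constraints $d|_\sigma$, which are not members of $H$, so they are not ratios of deletion counts. Instances containing them can also leave the local-lemma regime, which is exactly the obstruction described in the introduction. The counter is a black box returning only totals $Z(K)$, so fixed variables cannot be ``handled by enumeration inside each call''. In your sequential scheme the pinned set, and with it the supposedly local factor, eventually covers the whole instance. So no variable marginal is ever computed, and the $n$-step self-reduction and its error accounting have nothing to run on. Your chain step, which draws $X_{\vbl(c_i)}$ from its marginal under $\mu_{H_i}$, has the same defect and in effect discards the warm start.

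The paper never computes a variable marginal. In the insertion $F\to J=F\cup\{c\}$ it runs an absorbing walk on full assignments, started from the warm sample of $\mu_F$. Each move mutates only $\Lambda=\vbl(A\cup B)$ for a $2$-tree skeleton $A$ rooted at $c$. Your ``pinning equals deletion'' idea becomes exact there because the boundary $C=N(A\cup B)$ is deleted as well. The remaining instance $M$ then does not mention $\Lambda$, and each move has weight $D^{-|\Lambda|}Z(M)$, a pure deletion count. The price of abandoning $C$ is paid by a signed inclusion--exclusion over skeletons. Its signs become arc directions, so all weights stay nonnegative, and flow balance (Proposition~\ref{prop:balance} with Lemma~\ref{lem:absorption}) yields exactly $\mu_J$. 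Efficiency comes from the exponential tail in skeleton size under $8pQ\le1$ (Proposition~\ref{prop:tail}). Through the warm-start occupation bound, this justifies keeping only skeletons of size $O(\log(m/\varepsilon))$.
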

For example, for Boolean $k$-CNF formulae, the condition becomes
\[
 \Delta+1\le \frac{2^{k/2}}{2\sqrt{\mathrm e}} \enspace .
\]

Counting  with relative accuracy $\eta$ returns a value $\widehat Z(K)$ satisfying $(1-\eta)Z(K)\le\widehat Z(K)\le(1+\eta)Z(K)$. The condition in Theorem~\ref{thm:main} is the sufficient condition of~\cite{CountingLLL} for efficient such approximate counting. Our sampler-given-a-counter result actually requires the weaker condition $8pQ\le1$, where $Q$ measures the actual first and second neighborhoods:
\begin{equation}\label{eq:parameters}
 Q=1+\max_c\bigl(|N_1(c)|+|N_2(c)|\bigr)
       \le1+\Delta+\Delta_2.
\end{equation}

\begin{theorem}[Deletion-only reduction]\label{thm:oracle}
There exists a randomized algorithm which given any CSP instance $H$ satisfying $8pQ\le1$, any $0<\varepsilon\le1/2$, and access to a  relative-approximation counter for $Z(K)$ for any $K\subseteq H$,  returns a satisfying assignment $X$ of $H$  with $\TV(\mathcal L(X),\mu_H)\le\varepsilon$ after at most $(m/\varepsilon)^{O(\log\Delta_2)}$ counter invocations of relative accuracy $\varepsilon/(32m)$ and $(n+m/\varepsilon)^{O(k\Delta\log D)}$ additional steps.
\end{theorem}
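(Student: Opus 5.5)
The plan is to reach $\mu_H$ through the chain of deletion-only sub-instances $\emptyset=H_0\subset H_1\subset\cdots\subset H_m=H$, where $H_i=\{c_1,\dots,c_i\}$. Each $H_i$ is a subset of $H$, so it inherits $8pQ\le1$ (deleting constraints only shrinks neighbourhoods), and the counter may be queried on it. We maintain a sample $X^{(i)}$ whose law is within $\varepsilon i/m$ of $\mu_{H_i}$. The key structural fact is a Markov property. Fix $S_i=\vbl(c_i)\cup\bigcup_{c\in N_1(c_i)}\vbl(c)$. Under $\mu_{H_{i-1}}$, conditionally on the values $x_{\overline{S_i}}$ outside $S_i$, the law of $X_{S_i}$ is uniform over the local completions satisfying the constraints of $H_{i-1}$ that meet $S_i$. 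The same holds for $\mu_{H_i}$ with $c_i$ added. These local laws can be enumerated in $D^{k(\Delta+1)}$ time, which is the source of the $(n+m/\varepsilon)^{O(k\Delta\log D)}$ overhead.

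Hence $\mu_{H_i}$ is obtained from $\mu_{H_{i-1}}$ in two steps. First reweight the marginal of $X_{\overline{S_i}}$ by
\[
 w_i(x_{\overline{S_i}})=\frac{\#\{\text{completions satisfying } H_i \text{ locally}\}}{\#\{\text{completions satisfying } H_{i-1}\text{ locally}\}}\le 1 .
\]
Then resample $X_{S_i}$ exactly from the new local uniform law. The reweighting is done by rejection: accept the current outside configuration with probability $w_i$. The overall acceptance probability is $Z(H_i)/Z(H_{i-1})$, and by the LLL under $8pQ\le1$ this is at least $1-2p$. More importantly, local lemma estimates show that $w_i$ is bounded below by a constant, since the local completion space is a product measure conditioned on at most $\Delta+1$ events, each of probability $\le p$.

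The main obstacle is what to do upon rejection. Restarting the whole chain fails, because $\prod_i(1-2p)$ is exponentially small. Instead I would use the counter to avoid rejection altogether. The idea is to express the target law of $X_{\overline{S_i}}$ under $\mu_{H_i}$ as $\mu_{H_{i-1}}$ corrected by a factor depending only on the constraints within distance $2$ of $c_i$, namely $N_1(c_i)\cup N_2(c_i)$ (this is why $Q$ appears). The needed correction probabilities are ratios $Z(K)/Z(K')$ with $K'\subseteq K\subseteq H$, obtained by deleting constraints in that second neighbourhood.

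Concretely, I would expand the correction as a telescoping product or series over successive deletions of nearby constraints. Each level of the expansion branches over at most $\Delta_2$ constraints and contributes a factor $O(pQ)\le 1/8$. Truncating at depth $L=O(\log(m/\varepsilon))$ therefore costs error $\le 8^{-L}\le\varepsilon/(2m)$ per step, and $\Delta_2^{L}=(m/\varepsilon)^{O(\log\Delta_2)}$ counter calls in total. Approximation errors add: each counter call at relative accuracy $\varepsilon/(32m)$ perturbs the resulting probabilities by $O(\varepsilon/m)$. Summing over the $m$ steps gives total variation $\le\varepsilon$ by the triangle inequality and the data-processing property of $\TV$.

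The step I expect to be hardest is verifying that the truncated expansion yields a valid probability distribution with per-level contraction $1/8$ under exactly the hypothesis $8pQ\le1$. I would first try to prove this via an LLL-style induction bounding $\Pr_{\mu_K}[\text{$c$ violated}]\le 2p$ uniformly over all $K\subseteq H$ and $c\in H$. I would then bound the influence of deleting one constraint at distance $j$ by $(2pQ)^j$ through a path-coupling argument in the dependency graph.
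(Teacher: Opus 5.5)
\textbf{Gap in the step that replaces rejection.} You correctly reduce the problem to changing the law of $X_{\overline{S_i}}$ from its $\mu_{H_{i-1}}$-marginal to the $w_i$-reweighted one, and you correctly see that rejection with restart fails. But ``expressing the correction as a telescoping product or series of deletion ratios'' only computes \emph{numbers}. What this step needs is a Markov kernel $K_i$ on $[D]^n$ with $\mu_{H_{i-1}}K_i=\mu_{H_i}$: a rule saying where the excess mass of an outside configuration with small $w_i$ is sent.

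Neither quantity you compute gives such a rule. The ratio $w_i$ is available by local enumeration anyway, and $Z(H_i)/Z(H_{i-1})$ is one counter call; neither tells you how to move a single sample. Moving mass means changing variables outside $S_i$. That can violate constraints farther out, which forces further changes, and so on. A truncated inclusion--exclusion series also has signed terms, so there is no probability distribution whose validity your last paragraph could check.

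The missing idea in the paper is a \emph{flow}:
\begin{itemize}
\item arc weights on all of $[D]^n$ with net outflow $Z(J)\one_{\{x\models F\}}-Z(F)\one_{\{x\models J\}}$;
\item built from mini-flows indexed by $2$-trees $A\ni c$, grown with an insulating boundary $C$;
\item each arc mutates $\Lambda=\vbl(A\cup B)$ and has weight $D^{-|\Lambda|}Z(M)$, where $M$ is a deletion subinstance;
\item the inclusion--exclusion sign $(-1)^{|A|-1}$ becomes an arc \emph{direction}, so all weights are nonnegative.
\end{itemize}
The absorbing walk of Lemma~\ref{lem:absorption}, with stopping weight $Z(F)$ at solutions of $J$, then maps $\mu_F$ exactly to $\mu_J$. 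This walk may pass through non-solutions of $F$, and through solutions of $J$ without stopping. Truncation is justified not by a per-state contraction but by the warm-start occupation bound $\mathbb E[\text{selections of }P]\le\|W_P\|_1/(Z(F)Z(J))$. This is combined with the bound $p\kappa(2p\kappa)^{r-1}$ per family and at most $\alpha_r\Delta_2^{r-1}$ families of size $r$.

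\textbf{Supporting claims that are wrong or unsupported.}
\begin{itemize}
\item $w_i$ is not bounded below by a constant. Once $x_{\overline{S_i}}$ is pinned, the constraints in $N_1(c_i)$, and those in $N_2(c_i)$ meeting $S_i$, no longer have violation probability at most $p$. This failure of the hypothesis under pinning is exactly what the paper is built to avoid. In fact $w_i$ can vanish, when the pinned values force the variables of $c_i$, through its neighbours, into a violating assignment. Only the average $Z(H_i)/Z(H_{i-1})\ge1-2p$ is controlled.
\item The natural ``correction probabilities'', namely the $\mu_{H_i}$-law of the boundary variables, are pinned quantities. A deletion-only counter does not supply them, and you give no way to rewrite them as ratios $Z(K)/Z(K')$.
\item The path-coupling bound $(2pQ)^j$ on the influence of a deletion is a correlation-decay statement that you do not prove. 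The paper's route never needs it: its geometric decay comes from the disjoint scopes of the skeleton (factor $p(2p)^{r-1}$) and the conditional-LLL bound $Z(M)^2/(Z(F)Z(J))\le\kappa^r$.
\end{itemize}

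Your final error accounting, data processing plus the triangle inequality over the $m$ insertions, is sound once a kernel exists. It requires averaging the per-insertion error over the exact $\mu_{H_{i-1}}$ rather than bounding it state by state, since individual omission probabilities can be large.
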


Theorem~\ref{thm:main} follows from Theorem~\ref{thm:oracle} since $\Delta_2\le\Delta(\Delta-1)$, and hence
\begin{equation}\label{eq:closure}
 Q\le1+\Delta^2\le(\Delta+1)^2,
 \qquad 4\mathrm e p(\Delta+1)^2\le1
       \ \Longrightarrow\ 8pQ\le2/\mathrm e<1.
\end{equation}
The deterministic counter of~\cite[Theorem~1.2]{CountingLLL} runs in
$(n\Delta/\eta)^{O(k\Delta\log D)}$ time at relative accuracy
$\eta$. Applying
it to the queries of Theorem~\ref{thm:oracle} with
$\eta=\varepsilon/(32m)$ gives the claimed total running time.

\subsection{An asymmetric extension}

Let $p_c$ be the individual violation probability of constraint $c$, and
write
\[
 \mathcal B(c)=\{c\}\cup N_1(c)\cup N_2(c).
\]
The same insertion flow also gives a reduction under an asymmetric
product condition.

\begin{theorem}[Asymmetric deletion-only reduction]\label{thm:asymmetric}
Fix $0<C<1/4$. If there are $x_c\in(0,1)$ such that
\[
 p_c\le Cx_c\prod_{b\in\mathcal B(c)}(1-x_b)
 \qquad(c\in H) ,
\]
then the conclusion of Theorem~\ref{thm:oracle} holds, with the constants in its resource bounds allowed to depend on $C$.
\end{theorem}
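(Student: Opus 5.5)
We will re-run the proof of Theorem~\ref{thm:oracle} with every place where $8pQ\le1$ is used replaced by the asymmetric hypothesis. The reduction builds the sample by an \emph{insertion flow}. Order the constraints $c_1,\dots,c_m$ and let $H_i=\{c_1,\dots,c_i\}$. We pass from an approximate sample of $\mu_{H_{i-1}}$ to one of $\mu_{H_i}$ by a local repair: resample the variables in a bounded neighbourhood of $c_i$ conditioned on the constraints touching it. This uses counter calls only on subinstances $K\subseteq H$, so the deletion-only structure is untouched. The quantitative inputs of that proof are two estimates, and everything else carries over.

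\emph{First estimate: the ratios $Z(H_i)/Z(H_{i-1})$ are bounded below.} This controls the rejection probability and keeps the relative counting error $\varepsilon/(32m)$ at total variation $O(\varepsilon/m)$ per step. In the symmetric case it follows from an LLL-type bound $\Pr_{\mu_K}[\bar c]\le 2p$ or similar. Here we use the standard asymmetric LLL induction: for every $K\subseteq H$ and $c\notin K$ (more generally, an event $c$ conditioned on a subfamily),
\[
\Pr_{\mu_K}[c \text{ violated}]\le x_c' ,\qquad x_c'=\frac{p_c}{\prod_{b\in N_1(c)\cap K}(1-x_b)},
\]
by the usual conditional-probability induction on $|K|$. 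This induction needs $p_c\le x_c\prod_{b\in N_1(c)}(1-x_b)$, which our hypothesis implies with room to spare (the factor $C<1/4$ and the extra product over $N_2(c)$). Hence $Z(K\cup\{c\})/Z(K)\ge 1-Cx_c\ge 3/4$.

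\emph{Second estimate: the local repair terminates quickly.} This is where $N_2$ entered through $Q$. Repairing $c_i$ may violate constraints at distance $1$, and the recursion forms a branching process on the dependency graph whose offspring are bounded in $N_1\cup N_2$. In the symmetric proof the mean offspring is at most $pQ\cdot O(1)\le 1/2$. In the asymmetric setting we replace this counting by a weighted witness-tree argument in the style of Moser--Tardos/Kolipaka--Szegedy. A witness tree rooted at $c$ whose children come from $\mathcal B(\cdot)$ has probability at most $\prod_v C p'_v$. Summing over trees via the generating-function identity $\sum_{\tau}\prod_v (x_v/(1-x_v))\cdot\ldots\le x_c/(1-x_c)$, made possible by the product over all of $\mathcal B(c)$, gives an expected tree size at most $O_C(1)\sum_c x_c/(1-x_c)$. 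The factor $4C<1$ gives geometric decay $(4C)^s$ in the tree size $s$.

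We expect this second step to be the main obstacle. The symmetric proof likely bounds the number of relevant trees by $Q^{s}$ and the number of counter calls by $(m/\varepsilon)^{O(\log\Delta_2)}$, a truncation at depth $O(\log(m/\varepsilon))$ paired with branching $\Delta_2$. We must instead show that trees larger than $s_0=O_C(\log(m/\varepsilon))$ occur with probability at most $\varepsilon/(4m)$ using the weights, via $\Pr[\text{size}\ge s]\le (4C)^{s}\cdot x_c/(1-x_c)$. For this the bound on $x_c$ must not blow up the prefactor. We would handle this by noting that only the root contributes $x_c/(1-x_c)$, and that $x_c<1$ is fixed while the prefactor can be absorbed since $p_c\le Cx_c(1-x_c)$ forces $x_c/(1-x_c)$ to be polynomially related to $1/p_c$, which is at most $D^k$. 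That is where the allowed $C$-dependence and the $O(k\Delta\log D)$ exponent come in. Once these two estimates are in place, the union bound over the $m$ insertion steps and the accounting of counter calls and running time is word-for-word that of Theorem~\ref{thm:oracle}, with constants depending on $C$.
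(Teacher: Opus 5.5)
Your top-level plan is the right one: rerun the proof of Theorem~\ref{thm:oracle} and change only the steps that used $8pQ\le1$. But you misidentify those steps, and the estimate that carries the argument is missing.

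\textbf{The reduction you describe is not the one being rerun.} It is not a local resampling around $c_i$ conditioned on nearby constraints. Resampling a fixed region with the outside values held fixed keeps the outside marginal of $\mu_{H_{i-1}}$, so it does not produce $\mu_{H_i}$ without reweighting by pinned counts. The reduction is the absorbing walk driven by the skeleton mini-flows $W_{A,B}$, which is exact for every satisfiable instance with no LLL hypothesis. There is no rejection step for your single-constraint ratio bound to control.

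\textbf{Where the hypothesis actually enters.} It enters only through the tail of $\sum_{(A,B)}\|W_{A,B}\|_1/(Z(F)Z(J))$ over large skeletons. Each summand contains the count ratio $Z(M)^2/(Z(F)Z(J))$, where $M$ is $J$ with the whole region $A\cup B\cup C\subseteq\bigcup_{a\in A}\mathcal B(a)$ deleted. You need two things. First, the iterated bound $Z(M)^2/(Z(F)Z(J))\le\prod_{a\in A}\kappa_a$ with $\kappa_a=\prod_{b\in\mathcal B(a)}(1-Cx_b)^{-2}$. Second, a proof that the product hypothesis pays \emph{simultaneously} for these squared deletion factors and for the branching of skeletons through $N_2$. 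Your witness-tree sketch never accounts for the deletion factors, so its decay $(4C)^s$ is asserted rather than derived. The paper gets it from the identity $(1-Cu)^2-(1-u)(1+\beta u)=(1-C)^2u^2\ge0$ with $\beta=1-2C$. This gives $2z\,p_a\kappa_a\prod_{b\in N_2(a)}(1+\beta x_b)\le\beta x_a$ for $z=\beta/(2C)$, and $z>1$ is exactly the condition $C<1/4$. A weighted tree comparison over generated skeletons then yields $\sum_{(A,B)}z^{|A|-1}\|W_{A,B}\|_1/(Z(F)Z(J))\le Cx_c$.

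\textbf{Your prefactor fix breaks the resource bounds.} In the paper's accounting the root prefactor is $Cx_c<C$. So the ideal execution has at most $(1+C)m$ expected steps and at most $Cmz^{-L}$ expected selections of omitted families. The same $\eta=\varepsilon/(32m)$ and $T$ then work with $L=\lceil\log_z(16m/\varepsilon)\rceil$. Your route instead uses $x_c/(1-x_c)\le C/p_c\le CD^k$, which forces a cutoff of order $\log(m/\varepsilon)+k\log D$. The number of retained families, and hence of counter calls, is then no longer bounded by $(m/\varepsilon)^{O_C(\log\Delta_2)}$ when $k\log D\gg\log(m/\varepsilon)$. Likewise, an expected-length bound of order $\sum_c x_c/(1-x_c)$, which can be of order $mD^k$, does not keep the accumulated error $4\eta\,\mathbb EN$ small at accuracy $\varepsilon/(32m)$.

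\textbf{Not everything else carries over verbatim.} The proof of Theorem~\ref{thm:oracle} builds its fallback satisfying assignment with Harris's symmetric criterion $\mathrm e(\Delta+1)\max_c p_c^{1-a_{\rm H}}\le1$. The product hypothesis does not imply this: one constraint may have $p_c$ of order $C$ while $\Delta$ is large. That step must be redone with Harris's asymmetric criterion, e.g.\ with witnesses $u_b=2p_b$, using $\sum_{b\in\{a\}\cup N_1(a)}p_b\le C/\mathrm e$.
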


The counting theorem of Liu et al.~\cite{CountingLLL} supplies the
required deletion counts in its stated symmetric regime. To supply them
in the additional regime of Theorem~\ref{thm:asymmetric}, we extend their
counting proof. Their marginal expansion and size-budget truncation are
our foundation; the additional estimates control the same recursion
under the product condition.

\begin{theorem}[Counting under the asymmetric condition]\label{thm:asymmetric-counting}
Under the hypothesis of Theorem~\ref{thm:asymmetric}, a deterministic
algorithm returns, for every $K\subseteq H$ and $0<\delta\le1/2$, a
positive rational $\widehat Z(K)$ satisfying
\[
 (1-\delta)Z(K)\le\widehat Z(K)\le(1+\delta)Z(K)
\]
in $(n+m/\delta)^{O(k\Delta\log D)}$ time. The constant in the exponent
is absolute, uniformly over $0<C<1/4$.
\end{theorem}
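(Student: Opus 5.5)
The plan is to follow the Liu--Wang--Yin--Zhang--Zhou framework: telescope $Z(K)$ into marginals, expand each marginal recursively, and truncate the expansion at a size budget. The symmetric condition enters their argument only in two estimates. The first is a local uniformity bound saying that conditional marginals stay close to uniform. The second is a decay bound on the total weight of the recursion tree. I will re-derive both under the product condition with $x_c$ weights.

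\textbf{Step 1 (reduction to marginals).} Since the hypothesis of Theorem~\ref{thm:asymmetric} is monotone under deleting constraints, every $K\subseteq H$ inherits it (with the same $x_c$ restricted to $K$). So it suffices to treat $K=H$. Order the constraints $c_1,\dots,c_m$ and write
\[
Z(H)=D^n\prod_{i=1}^m \Pr[c_i\text{ satisfied}\mid c_1,\dots,c_{i-1}\text{ satisfied}].
\]
Each factor is $1-\Pr_{\mu_{K_{i-1}}}[\bar c_i]$. The asymmetric LLL, applied with the $x_c$ and using only $\mathcal B(c)\supseteq N_1(c)$, gives
\[
\Pr_{\mu_{K}}[\bar c]\le p_c\prod_{b\in N_1(c)}(1-x_b)^{-1}\le Cx_c\le C.
\]
Hence it suffices to estimate each conditional violation probability to additive error $\delta x_c/(2m)$, or simply $\delta/(4m)$.

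\textbf{Step 2 (marginal expansion).} I would then express $\Pr_{\mu_K}[\bar c]$ through Liu et al.'s recursion. It reveals the variables of $c$ one at a time and, at each step, expands the marginal of the current variable via constraints adjacent to the frozen/unfrozen frontier. This produces a tree whose branches correspond to connected sets of constraints, which are $2$-trees or connected sets in $G^{2}$, grown from $c$. A node that adds constraint $b$ carries a weight of order $p_b$ divided by the product of $(1-x_{b'})$ over neighbours already conditioned.

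\textbf{Step 3 (the key estimate, main obstacle).} The central claim is that the total weight of branches containing a set $S$ of constraints is at most $\prod_{b\in S} 4Cx_b$ up to absolute constants. Here the factor $\prod_{b\in\mathcal B(c)}(1-x_b)$ in the hypothesis exactly pays for the distance-$\le2$ neighbours that a $2$-tree enumeration must account for. Summing over connected sets in $G^2$ containing $c$ is then done with the standard asymmetric cluster/tree-counting bound: $\sum_{S\ni c}\prod_{b\in S}y_b\le y_c/(1-x_c)$ when $y_b\le x_b\prod_{b'\in\mathcal B(b)}(1-x_{b'})$. With the extra factor $4C<1$, the weight of trees of size $\ge L$ decays like $(4C)^L$. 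This step is the hardest, because the symmetric proof uses the crude count $(e\Delta^2)^L$ of $2$-trees, and I must replace it by an inductive weighted count. I would first attempt the inductive proof that the weighted sum over $2$-trees rooted at $c$ is at most $x_c/(1-x_c)$ times $\prod_{b\in\mathcal B(c)}(1-x_b)^{-1}$-type slack, mirroring Shearer/cluster-expansion induction.

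\textbf{Step 4 (truncation and running time).} I truncate branches at size $L=O(\log(m/\delta)/\log(1/(4C)))$. Here the dependence on $C$ affects only the constant in $L$, not the enumeration exponent, since the enumeration of size-$L$ connected sets and the variable assignments on them costs $(\Delta\cdot D^{k})^{O(L)}$ per level, as in~\cite{CountingLLL}. This truncation makes each marginal accurate to $\delta/(4m)$. Multiplying the $m$ estimates gives relative error $\delta$, and the total cost is $(n+m/\delta)^{O(k\Delta\log D)}$. If the exponent's constant must be uniform in $C$, I would instead fix $L$ using the worst case $C\to1/4$ only through the factor $1/(1-4C)$ appearing inside the logarithm, and absorb it; failing that, I accept a $C$-dependent constant.
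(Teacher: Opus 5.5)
Your outline has the right shape: telescope into marginals, use a 2-tree expansion with a size budget, and pay for a weighted 2-tree count with the $\mathcal B(c)$ factors. But Step 3 is asserted rather than proved, and in the form you state it, it cannot give the theorem.

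The terms of the expansion are not weighted by $p_b$ alone. Write $\beta_{J,c}=1-Z(J)/Z(F)$ as a signed sum of $\lambda_{A,B}Z(M)/Z(F)$. The ratio $Z(M)/Z(F)$ then contributes a factor $(1-\beta_{J_i,d_i})^{-1}$ for every deleted constraint, and the deleted constraints lie anywhere in $\bigcup_{a\in A}\mathcal B(a)$, including at distance two. If you count 2-trees with the cluster bound $v_b=x_b/(1-x_b)$, the branching over $N_2(b)$ uses up exactly the factors $(1-x_{b'})$, $b'\in N_2(b)$, of the hypothesis. The inflation from the deleted distance-two constraints is then left unpaid, and in the asymmetric setting it is not bounded by an absolute constant per node.

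The missing idea is to spend the slack $C<1/4$ factor by factor. With $t_b=Cx_b$, bound the child tree sums by $\rho t_b$ rather than by $x_b/(1-x_b)$, and inflate the deletion factors to $(1-a_0t_b)^{-1}$. Then use
\[
 (1-a_0Cu)-(1-u)(1+\rho Cu)=(1-4C)u+\rho Cu^2\ge0\qquad(0\le u\le1,\ a_0=\tfrac83,\ \rho=\tfrac43)
\]
to get $\rho g_c\prod_{b\in N_2(c)}(1+\rho t_b)\le\rho t_c$, where $g_c=p_c\prod_{b\in\mathcal B(c)}(1-a_0t_b)^{-1}$. This gives decay $q^{|A|}$ at the absolute rate $q=3/4$.

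Your rate $(4C)^L$ tends to $1$ as $C\to1/4$. So your budget $L$, and with it the exponent of the running time, depends on $C$. The statement explicitly requires that exponent to be uniform over $0<C<1/4$, so accepting a $C$-dependent constant proves a weaker result. Moving $1/(1-4C)$ inside the logarithm is not available from a $(4C)^L$ tail.

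The second gap is error propagation. The expansion writes a marginal in terms of marginals of proper subformulas, and these must themselves be computed by the truncated recursion. Bounding the omitted tail is therefore not enough: you must show that child errors contract rather than accumulate over the recursion depth.

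The paper handles this as follows. For each term $f_P(v)=(-1)^{|A|-1}\lambda_{A,B}\prod_i(1-v_i)^{-1}$, with $H_P=\lambda_{A,B}\prod_i(1-a_0t_{d_i})^{-1}$, it proves $\sum_it_{d_i}|\partial_if_P(v)|\le H_P/(a_0-1)$; this is why $a_0>1$ is needed. After the weighted count this gives $\sum_Pq^{-|A|}\sup_v\sum_it_{d_i}|\partial_if_P|\le\frac45t_c$. It then runs the induction $|\widehat\beta_\ell(J,c)-\beta_{J,c}|\le8t_cq^\ell$, measuring child errors in units of $t_d$.

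The derivative bound holds only for child values in $[0,\overline t_d]$, while for small $\ell$ the inductive error $8t_cq^\ell$ exceeds $t_c$. So each recursive estimate must be clipped to a computable interval known to contain the true marginal. Since the $x_c$ are unknown to the algorithm, this needs something like $\overline t_c=p_c\prod_{b\in N_1(c)}(1-2p_b)^{-1}$, which is itself a conditional-LLL witness with $\overline t_c\le t_c$.

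The recursion must also be deletion-only, with unconditional local coefficients $\lambda_{A,B}$ obtained from the diagonal $u=x$ of the grouping identity. Your description of revealing the variables of $c$ one at a time suggests pinning, and the product hypothesis is inherited by deletion subinstances but not by pinned ones. Finally, the running-time bound needs control of the recursive call tree, since a size-$r$ term spawns up to $Qr$ children at budget $\ell-r$; a per-level enumeration cost is not enough.
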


Thus the asymmetric sampler can be implemented in total time
$(n+m/\varepsilon)^{O_C(k\Delta\log D)}$, including its counting calls.
For the symmetric specialization, however, we use the new counter with
our original symmetric reduction. This gives the following complete
algorithm without a dependence on the gap below $C=1/4$.

\begin{theorem}[Symmetric counting and sampling]\label{thm:symmetric-complete}
Suppose
\[
 p<\frac{Q^Q}{4(Q+1)^{Q+1}}.
\]
For every $0<\delta,\varepsilon\le1/2$, there is a deterministic
relative-$\delta$ approximation to $Z(H)$ in
$(n+m/\delta)^{O(k\Delta\log D)}$ time, and a randomized algorithm
that always returns a satisfying assignment $X$ with
$\TV(\mathcal L(X),\mu_H)\le\varepsilon$ in
$(n+m/\varepsilon)^{O(k\Delta\log D)}$ total time.
\end{theorem}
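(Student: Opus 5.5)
The plan is to derive Theorem~\ref{thm:symmetric-complete} from Theorems~\ref{thm:asymmetric} and~\ref{thm:asymmetric-counting} by choosing uniform weights $x_c=x$. The one subtlety is keeping the constants independent of the gap between $p$ and the threshold. Set $x=1/(Q+1)$. Every $c$ has $|\mathcal B(c)|\le Q$, by the definition of $Q$ in \eqref{eq:parameters}. Hence
\[
 x\prod_{b\in\mathcal B(c)}(1-x_b)\ \ge\ \frac1{Q+1}\Bigl(\frac{Q}{Q+1}\Bigr)^{Q}=\frac{Q^Q}{(Q+1)^{Q+1}}.
\]
The hypothesis $p<Q^Q/(4(Q+1)^{Q+1})$ therefore gives $p_c\le p\le C x_c\prod_{b\in\mathcal B(c)}(1-x_b)$ for some $C<1/4$. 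A single such $C$ is read off from the strict inequality, namely $C=p(Q+1)^{Q+1}/Q^Q$. This verifies the hypothesis of Theorem~\ref{thm:asymmetric}.

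The counting half follows directly. Theorem~\ref{thm:asymmetric-counting} states that its exponent constant is absolute, uniformly over $0<C<1/4$. Applying it with $K=H$ and accuracy $\delta$ gives the deterministic relative-$\delta$ approximation to $Z(H)$ in $(n+m/\delta)^{O(k\Delta\log D)}$ time, with no dependence on how close $C$ is to $1/4$.

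For sampling I would not use Theorem~\ref{thm:asymmetric}, because its constants may depend on $C$. I would use the symmetric reduction, Theorem~\ref{thm:oracle}, which needs only $8pQ\le1$. This follows from the hypothesis:
\[
 \frac{Q^Q}{(Q+1)^{Q+1}}=\frac{1}{Q+1}\Bigl(1+\frac1Q\Bigr)^{-Q}\le\frac{1}{Q+1}\quad\Longrightarrow\quad 8pQ<\frac{2Q}{Q+1}.
\]
This bound is only $<2$, not $\le1$, so this step needs care and I expect it to be the main obstacle. The honest bound is $(1+1/Q)^{-Q}\le 1/2$ for $Q\ge1$, which gives $8pQ<Q/(Q+1)<1$. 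Indeed $(1+1/Q)^Q\ge2$ for every $Q\ge1$, so $p<\frac{1}{8(Q+1)}$ and $8pQ\le1$ holds.

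Theorem~\ref{thm:oracle} then makes $(m/\varepsilon)^{O(\log\Delta_2)}$ counter calls, each at accuracy $\eta=\varepsilon/(32m)$, on subinstances $K\subseteq H$. Each such $K$ inherits the hypothesis of Theorem~\ref{thm:asymmetric} with the same $x$ and $C$, since deleting constraints shrinks every $\mathcal B(c)$ and so only increases the products. Each call is therefore answered by Theorem~\ref{thm:asymmetric-counting} in $(n+32m^2/\varepsilon)^{O(k\Delta\log D)}=(n+m/\varepsilon)^{O(k\Delta\log D)}$ time. Finally, $\log\Delta_2\le 2\log\Delta\le O(k\Delta\log D)$, so the number of calls is also $(n+m/\varepsilon)^{O(k\Delta\log D)}$. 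Multiplying the number of calls by the cost per call and adding the additional steps of Theorem~\ref{thm:oracle} gives the stated total time, with absolute constants throughout.

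The sampler always returns a satisfying assignment, since Theorem~\ref{thm:oracle} outputs one.
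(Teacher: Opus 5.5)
Your proof is correct and follows the paper's argument. You use uniform witnesses $x_c=1/(Q+1)$ with a $C<1/4$ to obtain all deletion counts from Theorem~\ref{thm:asymmetric-counting}, and you sample through Theorem~\ref{thm:oracle} rather than Theorem~\ref{thm:asymmetric}, so nothing depends on the gap below $1/4$. The only difference is cosmetic: you check $8pQ\le1$ via Bernoulli's $(1+1/Q)^Q\ge2$, giving $8pQ<Q/(Q+1)$, whereas the paper writes $8pQ<2(Q/(Q+1))^{Q+1}<2/\mathrm e$; your brief ``only $<2$'' detour can simply be deleted.
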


In particular, $4\mathrm e p(Q+1)\le1$ suffices.
Theorem~\ref{thm:symmetric-complete}, compared with combining
\cite{CountingLLL} and Theorem~\ref{thm:oracle},  replaces
$(\Delta+1)^2$ by $Q+1$ and gives a small finite-$Q$ refinement. In the
worst-case, $Q\sim\Delta^2$ and  the leading constant remains
$4\mathrm e$ so that the full symmetric counting-and-sampling guarantee is
 essentially the one already obtained without the asymmetric
extension. The replacement of $(\Delta+1)^2$ by $Q+1$ is motivated by the fact that the latter can be significantly smaller  in graphs with many small cycles such as lattices.

\section{Related work}\label{sec:related}

Two difficulties recur in sampling under local-lemma conditions:
local repair need not preserve the uniform distribution, and fixing
variables can destroy the hypotheses used to analyze the remaining
instance. The closest  work to ours uses constraint-wise comparisons, organizing the computation around inserting one constraint at a time. We begin there, and then place that approach among the main sampling methods. Recall that a constraint is
atomic if it forbids exactly one assignment to its variables.

\subsection{Constraint-wise methods for sampling and the Counting LLL}

The closest preceding approaches compare formulas differing by a single
constraint. Specifically, Wang and
Yin~\cite{WangYin} developed such a coupling for atomic CSPs and used
it for deterministic counting and randomized sampling. For common
domain size $D$, their sufficient condition is
\[
 (8\mathrm e)^3p(\Delta+1)^{2+\phi(D)}\le1,
 \qquad
 \phi(D)=\frac{2\log(2-1/D)}{\log D-\log(2-1/D)}.
\]
The exponent tends to $2$ as $D$ grows and is about $4.82$ for Boolean variable domains.  

Liu, Wang, Yin, Zhang, and Zhou~\cite{CountingLLL} reached $4\mathrm e p(\Delta+1)^2\le1$ for arbitrary predicates using an expansion over 2-trees, i.e., sets of constraints that are independent in the dependency graph $G$ and connected in $G^2$. These structures, which also underlie our work, have a long history in the algorithmic LLL literature going back to the work of Alon~\cite{Alon}, while these and related structures appear in marking, coupling, and recursive algorithms~\cite{Moitra,FGYZ,JPVGeneral,HWYCounting,WangYin,CountingLLL}. The deterministic counter of~\cite{CountingLLL} takes
$(n\Delta/\delta)^{O(k\Delta\log D)}$ time at relative accuracy $\delta$, while their randomized algorithm takes $\operatorname{poly}(k,\Delta,D)(n/\delta)^2$ time and succeeds with probability at least $3/4$. The work~\cite{CountingLLL} does not provide a sampling algorithm for this regime and leaves the question open.

\subsection{Marginals, marking, and general predicates}

Moitra's groundbreaking work~\cite{Moitra} was the first to give approximate counting and sampling algorithms under LLL-like conditions. His
method estimates variable marginals by coupling two conditioned
distributions. A marking of the variables ensures that revealing
values leaves enough randomness in each constraint; the region of
disagreement can then be truncated and represented by a linear
program. Marking also addresses the failure of ordinary
self-reduction after arbitrary pinnings.

Guo, Liao, Lu, and Zhang~\cite{GLLZ} extended this approach to
hypergraph colorings and non-Boolean domains. Feng, Guo, Yin, and
Zhang~\cite{FGYZ} improved the running time for bounded-occurrence
$k$-SAT by sampling a marked projection and then a compatible
completion. Their mixing argument applies to the projection even
when single-variable moves need not connect the full solution space.
State compression, introduced by Feng, He, and Yin~\cite{FHY},
generalizes marking by mapping each domain to a smaller one. This
allows projected sampling for atomic CSPs and for constraints with
bounded numbers of forbidden assignments.

For arbitrary predicates, Jain, Pham, and Vuong~\cite{JPVGeneral}
proved deterministic approximate counting and approximate sampling
under $ kD^3p\Delta^7<c_0,$
where $c_0>0$ is absolute, with polynomial running time for fixed
structural parameters. Their separate work on atomic
CSPs~\cite{JPVAtomic} uses information percolation to analyze projected
dynamics and improve the parameter range and sampling time.

He, Wang, and Yin~\cite{HWYSampling} gave an expected near-linear-time
approximate sampler for general predicates under $  kD^2p\Delta^5\le c_1$
for a sufficiently small absolute $c_1>0$. Rather than running a
Markov chain, their marginal sampler recursively generates the
random dependencies needed to answer a query. The near-linear bound
uses an oracle deciding whether a partial local assignment already
forces a constraint to be satisfied. Their deterministic counting
algorithm~\cite{HWYCounting} derandomizes this construction in the
same parameter regime. It avoids linear programming and uses
generalized $\{2,3\}$-tree witnesses to bound the error, with a
polynomial running time for fixed structural parameters.

\section{How our algorithm works}\label{sec:overview}

In this section we give an informal, intuitive description of our algorithm and its analysis which we then flesh out formally in the remaining sections.

We start by describing a perfect sampling algorithm for any satisfiable CSP, assuming exact deletion counts, i.e., counts for subinstances obtained by removing constraints (without fixing any variable values). The sampler takes $m$ rounds, one per constraint, but each round may contain an unbounded number of steps and each step may need to consider an exponentially large set of options. We will then make this sampler efficient by restricting the available options to polynomially many, imposing a polynomial cap on the total number of steps across all rounds of a trial, and replacing exact counts by approximate ones. Approximate uniformity will follow from $8pQ \le 1$. 

The perfect sampler begins with the empty set of constraints and a uniformly random assignment in $[D]^n$.  It adds constraints one by one in an arbitrary fixed order. Let $F_i$ comprise the first $i$ constraints. After round $i$, the current assignment $\sigma_i$ will be uniformly distributed over the solutions of $F_i$. To make progress, the algorithm adds the next constraint and starts a random walk from $\sigma_i$, eventually stopping at a uniformly random solution $\sigma_{i+1}$ of $F_{i+1}$.

Three things are worth noting right away. First, the starting assignment is a uniformly random solution of $F_i$, not merely a solution; this warm start is crucial to the analysis. Second, the walk takes place on the full assignment space $[D]^n$ and may, thus, pass through states that violate $F_i$. Third, reaching a solution of $F_{i+1}$ does not necessarily end the walk. Stopping at the first such solution can introduce bias, so the walk may visit the solution set of $F_{i+1}$ several times before stopping.

The general idea  is that, under the LLL condition, inserting a constraint can usually be accommodated by only changing the variables of constraints that are nearby in the dependency graph. Of course, maintaining uniformity is more demanding than mere accommodation, i.e., than finding a nearby solution. To see what must be accomplished, fix one insertion and write $F_{i+1}=F_i\cup\{c\}$. The surviving solutions already have equal probability under the uniform distribution on solutions of $F_i$. Thus, the task is to transport and redistribute equally among the survivors the mass assigned to the solutions eliminated by $c$. 

We think of this redistribution as a flow problem on $[D]^n$. The solutions of $F_i$ that violate $c$ supply net outflow, the assignments that violate $F_i$ are transportation vertices, and the solutions of $F_{i+1}$ receive net inflow. These last vertices need not be pure sinks, as the walk may continue from them. Given such a flow, a flow-balance argument for  absorbing Markov chains will give a stopping rule whose final assignment is uniform on the solutions of $F_{i+1}$. We will construct the flow from families of local variable mutations, chosen with probabilities that can be computed using only deletion counts. There are three tasks to achieve this.

\subsection{Grouping the transitions}

The first task is to give each value assignment (state)  a bag of weighted outgoing transitions, grouped into families. Each family is indexed by a set of constraints, which we call the skeleton of the family. The skeleton determines the mutable variables and which changes (mutations) of the mutable variables are allowed. It is important to note that while the skeleton uniquely determines the mutable variables, the set may include variables outside the skeleton's own constraints. For a given skeleton, let $\Lambda$ be its set of mutable variables and let $M$ be the subinstance obtained from $F_{i+1}$ by deleting every constraint with a variable in $\Lambda$. Each of the $D^{|\Lambda|}$ mutations is either forbidden or has weight equal to the number of assignments to the variables outside $\Lambda$ that satisfy $M$, a deletion-count assumed to be available. 

At each step, the algorithm chooses a family with probability proportional to the total weight of its permitted mutations, and then chooses uniformly among those mutations (since, as we saw, they all have the same weight). If the current assignment satisfies $F_{i+1}$, a stopping option of weight $Z(F_i)$ competes with the family choices. A family with no permitted mutation has weight zero. Each such choice, including stopping, counts as one walk step. 

\subsection{Growing the skeleton and specifying its mutations}

The second task is to describe these families. All neighborhoods in this description are taken in the dependency graph of $F_{i+1}$.  We write $A$ for the skeleton and $U$ for the constraints whose variables will comprise the mutable region. Initially, $A=\{c\}$, while $U$ contains $c$ and all its immediate neighbors. Note that these neighbors are included in the mutable region, but not in the skeleton. We will call $c$ the root of the skeleton. We see that  constraints at distance two from the root $c$ can be affected by changes to variables appearing in $U$.

A procedure next grows the mutable region by considering constraints outside $U$ whose scopes intersect the variables appearing in $U$ and whose role has not yet been decided. Specifically, for the lowest such constraint $e$, the procedure explores two branches. In one branch, expansion stops at $e$ and $e$ itself is placed in a boundary set $C$. In the other branch, the procedure expands through $e$ and $e$  is added to the skeleton and also to $U$ along with its neighbors (except for those neighbors already assigned to $C$). Newly added constraints to $U$ may now intersect further unclassified constraints, so  the procedure continues until every constraint sharing a variable with a constraint in $U$ belongs to either  $U$ or $C$. Thus,  every generated skeleton is an independent set in the dependency graph $G$ and connected in $G^2$, the graph joining constraints at distance at most two in $G$.  The  executions of the procedure are in one-to-one correspondence with the generated skeletons: given the global constraint order and a  skeleton $A$, replaying the procedure recovers both $U$ and $C$. 

For a skeleton $A$,  let  $B=U\setminus A$. There are thus four kinds of constraints: those in the skeleton
$A$, the other constraints $B$ of the mutable region $U$, the boundary constraints $C$, and 
the  remainder constraints $M$, disjoint from $U$. The mutable variables are 
$\Lambda=\vbl(U)=\vbl(A\cup B)$, where $\vbl$ denotes the union of
the indicated scopes.  

For a mutation $x\to y$ of the variables in $\Lambda$ to be allowed, $x$ and $y$  must agree outside $\Lambda$ and must both satisfy every constraint in $B\cup M$.  Note that in every skeleton, the neighbors of $c$ belong to $B$  by the initialization rule. Since the walk starts at a solution of $F_i$, it follows that the neighbors of $c$  remain satisfied throughout the walk's trajectory. Regarding the skeleton's own constraints, the requirement is that $c$ must change its satisfaction status, while every other skeleton constraint must be violated by at least one of $x,y$, i.e., it may be repaired, become violated, or remain violated, but not remain satisfied. We call this last requirement the \emph{two-endpoint violation test}. No satisfaction status requirement is  placed on the boundary constraints $C$, which are ``abandoned to their fate''. This abandonment allows these constraints to act as insulation between the mutable region $U$ and the rest of the instance $M$, a non-interaction which is at the heart of being able to use deletion counts to weigh transitions. The price for this insulation is that while all the constraints in $C$ are satisfied when the walk starts, no effort is made to keep them so. 

The inserted constraint $c$  imposes a special requirement that depends on the parity of the skeleton's size $|A|$, i.e., on the number of constraints in $A$. An odd-sized skeleton can be used only when $c$ is violated, and only mutations that repair $c$ are allowed. An even-sized skeleton can be used only when $c$ is satisfied, and only mutations that violate $c$ are allowed. These reverse moves are correction terms from an
inclusion--exclusion identity on pairs of assignments. Reversing the
arcs reverses their contribution to net flow, allowing alternating
signs to become directions, while all transition weights remain
nonnegative. The two-endpoint tests, together with these orientations and the deletion-count weights, ensure that the local flows have exactly the required net balance when added together. Individual mutations need not reduce the number of violated constraints. It is their combined balance that gives the correct output distribution.

\subsection{Making the process efficient}

As mentioned, every  skeleton is an independent set of $G$ connected in $G^2$. Thus, a skeleton may contain a constant fraction of all constraints, and there may be exponentially many skeletons. Clearly, we cannot afford to enumerate them before each step in order to select one. Thus, the third task is to deal with large skeletons. 

The mutable region of a skeleton of size $r$ has at most $k(\Delta+1)r$ variables. A skeleton size cutoff then bounds the assignments to enumerate within each mutable region. The key quantity that will allow us to disregard a skeleton concerns how much traffic its family carries over the entire walk for inserting $c$. Specifically, sum the weights of all the arcs in the family and divide by $Z(F_i)Z(F_{i+1})$. The flow argument bounds the expected total number of times the walk uses this family by that normalized sum, including repeated uses. In particular, this also bounds the probability that the family is ever used.

Under $8pQ\le1$, we will show that the normalized total weight of a family decays exponentially in its skeleton size. The calculation is given in Section~\ref{sec:singlepair}.

This decay competes with the exponential growth in the number of skeletons as a function of size. Connectivity in $G^2$ bounds that growth, analogously to connectivity in $G$ bounding the growth of witness-trees in the Moser--Tardos analysis~\cite{MT}. In addition, all deleted constraints---the skeleton, its included neighbors, and the insulating boundary---lie within distance two of the skeleton. These are the two places where the second neighborhood enters the analysis. Our LLL condition ensures that the decay wins, so the sum of normalized family weights above a given skeleton size has a geometric tail. The expected-use bound then says that the whole execution is unlikely ever to choose a large skeleton. We do not need a comparable bound separately at every possible intermediate state; the uniform starting distribution, i.e., the warm start, is what makes the aggregate estimate useful.

Consequently, the algorithm can retain only skeletons of size logarithmic in $m/\varepsilon$. The same argument bounds the expected number of steps, which lets us bound the failure probability due to imposing a limit on the total steps across all insertions of a trial. Finally, the algorithm uses sufficiently accurate relative approximations to the deletion counts. A coupling with the ideal algorithm controls the error from omitting large mutations, limiting the running time, and approximating the remaining weights. 

\section{Constraint insertion by a flow}\label{sec:absorption}

Let $V$ be the variable set and put $\Omega=[D]^V$. Recall that $Z(K)$ counts assignments
to all of $V$, so $Z(\varnothing)=D^n$. Fix the insertion order
$c_1,\ldots,c_m$, with $F_0=\varnothing$ and
$F_i=\{c_1,\ldots,c_i\}$.

To analyze one constraint insertion, fix $F\subseteq H$, $c\in H\setminus F$, write
$J=F\cup\{c\}$, and assume $Z(J)>0$.  Write $\one_E$ for the indicator of
an event $E$. For finite nonnegative arc weights $W(x,y)$ on $\Omega$, put
$o(x)=\sum_yW(x,y)$ and $i(x)=\sum_yW(y,x)$.
We ask for the following net outflow at every assignment $x$:
\begin{equation}\label{eq:balance}
 o(x)-i(x)=Z(J)\one_{\{x\models F\}}-Z(F)\one_{\{x\models J\}}.
\end{equation}
At a solution of $F$ that violates $c$, the required net outflow is
$Z(J)$. At a solution of $J$, the required net inflow is
$Z(F)-Z(J)$. At every other assignment, inflow and outflow must agree.
After dividing by $Z(F)Z(J)$, these amounts are exactly the excess
and deficit between $\mu_F$ and $\mu_J$.

Give each solution of $J$ stopping weight $\tau(x)=Z(F)$ and set $\tau(x)=0$ at every other assignment. At each visit, the algorithm chooses between stopping and following each outgoing arc in proportion to its weight. The lemma below shows both that this rule gives the desired output law and that the expected number of visits to each assignment is bounded. After its proof, we use the visit bound to control how often the walk selects each mutable region.

\begin{lemma}[Absorption and occupation]\label{lem:absorption}
Let $a,\tau:\Omega\to[0,\infty)$ have the same sum $M_*>0$ over $\Omega$ and suppose $o-i=a-\tau$. Set $d(x)=o(x)+\tau(x)$. Define a walk on $\Omega$ whose initial state has law $a/M_*$ and which, at each $x\in\Omega$, stops with probability $\tau(x)/d(x)$ or moves to $y\in\Omega$ with probability $W(x,y)/d(x)$.

The  walk is well defined, i.e., $d(x)>0$ for all reachable $x$, stops almost surely, and has output law $\tau/M_*$. The expected number of  visits to each $x\in\Omega$ is at most $d(x)/M_*$.
\end{lemma}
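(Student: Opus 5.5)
The plan is to compute the occupation measure directly. The idea is to show that $h(x)=d(x)/M_*$ is a nonnegative solution of the forward (occupation) equation of the walk, and then use minimality of the true occupation measure. Exploiting the balance $o-i=a-\tau$ gives the key identity
\[
 d(x)=o(x)+\tau(x)=a(x)+i(x)\qquad(x\in\Omega).
\]
Only finiteness of $\Omega$ and nonnegativity of $W$, $a$, $\tau$ are needed.

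\emph{Step 1 (well-definedness).} Take $P(x,y)=W(x,y)/d(x)$ when $d(x)>0$ and $P(x,\cdot)=0$ otherwise. If $d(x)=0$ then $o(x)=0$, so $W(x,\cdot)\equiv0$. A state $x$ is reachable either because $a(x)>0$, and then $d(x)\ge a(x)>0$, or because it is entered along an arc with $W(y,x)>0$, and then $i(x)>0$, so again $d(x)>0$ by the identity above. An induction on the step count then shows that every state visited has $d>0$, so the walk is well defined.

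\emph{Step 2 (occupation bound).} Let $g(x)$ be the expected number of visits to $x$. Summing over time steps gives
\[
 g=\sum_{t\ge0}(a/M_*)P^t .
\]
This is the minimal nonnegative solution of $g(y)=a(y)/M_*+\sum_x g(x)P(x,y)$. For $h=d/M_*$ we have
\[
 a(y)/M_*+\sum_{x:d(x)>0}h(x)W(x,y)/d(x)=\bigl(a(y)+i(y)\bigr)/M_*=h(y),
\]
where arcs out of states with $d(x)=0$ contribute nothing. Hence $h$ solves the equation exactly. Induction on $T$ shows that each partial sum $\sum_{t\le T}(a/M_*)P^t$ is at most $h$ pointwise, so $g\le h=d/M_*$. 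This is the claimed visit bound.

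\emph{Step 3 (termination and output law).} The expected total number of visits is
\[
 \sum_x g(x)\le\sum_x d(x)/M_*<\infty,
\]
since $\Omega$ is finite. Therefore the walk stops almost surely, and
\[
 1=\Pr[\text{stop}]=\sum_x g(x)\,\tau(x)/d(x),
\]
with the sum over $d(x)>0$. Now $g\le d/M_*$ gives $\sum_x g(x)\tau(x)/d(x)\le\sum_x\tau(x)/M_*=1$, so equality must hold termwise. Thus $g(x)=d(x)/M_*$ wherever $\tau(x)>0$, and the output probability at $x$ is $g(x)\tau(x)/d(x)=\tau(x)/M_*$.

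The main subtlety is this last step. We only know that $h$ is \emph{a} solution, not that it equals $g$, because the chain might a priori have mass escaping in infinite paths. The trick of comparing the total stopping mass against $\sum\tau/M_*=1$ is what upgrades the inequality $g\le h$ to the exact output law.
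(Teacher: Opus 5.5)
Your proposal is correct and follows essentially the same route as the paper. Both arguments use the balance identity $d=a+i$ for well-definedness, an induction showing the partial occupation sums never exceed $d/M_*$, finiteness of $\sum_x d(x)$ for almost-sure stopping, and the squeeze $1=\sum_x g(x)\tau(x)/d(x)\le\sum_x\tau(x)/M_*=1$ to force the output law $\tau/M_*$; the ``minimal solution'' framing is only presentational, since your partial-sum induction is exactly the paper's bound $\sum_{j\le r}v_j\le d$.
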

\begin{proof}
We first bound the expected visits, and then use that bound to identify
the stopping distribution. Balance gives
\begin{equation}\label{eq:capacity}
 d(x)=a(x)+\sum_yW(y,x).
\end{equation}
Thus $d(x)=0$ only at states with neither initial mass nor an incoming
positive-weight arc, which are unreachable. Put $v_0=a$, and let
$v_j(x)/M_*$ be the probability  that the walk has not yet stopped and is at  $x$ immediately before step $j$, with steps numbered from $0$. Then
\[
 v_{j+1}(y)=\sum_{x:d(x)>0}v_j(x)\frac{W(x,y)}{d(x)}.
\]
Summing $v_j(x)/M_*$ over $j$ counts all visits to $x$,
not just the probability of reaching it once. We claim that
\begin{equation}\label{eq:visits}
 \sum_{j=0}^r v_j(x)\le d(x)\qquad(r\ge0).
\end{equation}
The case $r=0$ follows from~\eqref{eq:capacity}, and induction gives
\[
 \sum_{j=0}^{r+1}v_j(y)
 =a(y)+\sum_{x:d(x)>0}\left(\sum_{j=0}^rv_j(x)\right)
                \frac{W(x,y)}{d(x)}
 \le a(y)+\sum_xW(x,y)=d(y).
\]
After division by $M_*$, this is the expected-visit bound. Summing it
over $x$ shows that the expected total number of steps is finite,
so the process stops almost surely. At a state with $d(x)>0$, the total unnormalized mass stopped at $x$ is $\tau(x)\sum_jv_j(x)/d(x)\le\tau(x)$. States with $d(x)=0$ have $\tau(x)=0$ and are never visited. All $M_*$ units of initial mass eventually stop, while $\sum_x\tau(x)=M_*$. Thus every upper bound is attained, and division by $M_*$ gives the claimed output law.
\end{proof}

For the constraint insertion $F\to J$, apply Lemma~\ref{lem:absorption} with $a(x)=Z(J)\one_{\{x\models F\}}$ and
$M_*=Z(F)Z(J)$. The initial and terminal distributions in the lemma
are then $\mu_F$ and $\mu_J$. The balance equation and stopping
rule give the required distribution.

We construct the flow $W$ as a sum of nonnegative mini-flows, one for each family of mutations (skeleton). Write $P$ for a family label and $W_P$ for its arc weights, so $W=\sum_PW_P$. Put $o_P(x)=\sum_yW_P(x,y)$.  At $x$, the algorithm chooses family $P$ with probability $o_P(x)/d(x)$ and, conditional on that choice, chooses its destination $y$ proportionally to $W_P(x,y)$. Stopping still has probability $\tau(x)/d(x)$. This is the same walk on assignments, with an additional record of the family supplying each move. Different families can supply the same arc; their weights add.

Let $V_x$ count all visits to $x$ during this insertion, including returns and the visit at which the walk stops. We want to count every selection of $P$, not just the event that it is selected once. In the notation of the lemma's proof, $v_j(x)/M_*$ is the probability that step $j$ is reached at $x$. At that step the conditional probability of selecting $P$ is $o_P(x)/d(x)$. Thus
\[
 \begin{aligned}
 \mathbb E[\text{selections of }P]
 &=\sum_{j\ge0}\sum_{x:d(x)>0}
       \frac{v_j(x)}{M_*}\frac{o_P(x)}{d(x)}\\
 &=\sum_{x:d(x)>0}\mathbb E[V_x]\frac{o_P(x)}{d(x)}.
 \end{aligned}
\]
This is a sum of conditional choice probabilities, not an independence assumption: an earlier choice of $P$ may affect whether the walk returns to $x$. All summands are nonnegative, so the sums can be interchanged. Applying $\mathbb E[V_x]\le d(x)/M_*$ and writing $\|W_P\|_1=\sum_{x,y}W_P(x,y)$ gives
\begin{equation}\label{eq:usage}
 \mathbb E[\text{selections of }P]
 \le \sum_x\frac{d(x)}{M_*}\frac{o_P(x)}{d(x)}
 =\frac{\|W_P\|_1}{M_*}.
\end{equation}
Terms with $d(x)=0$ are taken to be zero; then $o_P(x)=0$ as well. The factor $d(x)$ in the visit bound cancels the denominator of the choice probability. There is no further factor for the length of the walk: the sum over $j$ has already counted every opportunity to choose $P$. Unreachable states with $d(x)>0$ may contribute to the upper bound. If $K_{\mathcal A}$ counts selections from a collection $\mathcal A$ of labels, repeated selections count separately. The event of at least one selection has indicator at most $K_{\mathcal A}$. Taking expectations and summing~\eqref{eq:usage} gives
\[
 \Pr[K_{\mathcal A}\ge1]\le\mathbb E K_{\mathcal A}
 \le\frac1{M_*}\sum_{P\in\mathcal A}\|W_P\|_1.
\]
Thus a family with small normalized total weight is unlikely ever
to be used, even if the walk revisits some assignments many times.
We will apply this to all families with large skeletons. No bound on
their selection probability at every possible state is needed, but
the starting law $a/M_*$ is essential and a sample from it is supplied by the uniform solution of $F_i$. In the ideal sampler, each round ends with an exactly uniform solution, supplying the next insertion with this law. Section~\ref{sec:trial-proof} compares the entire implemented trial with this ideal execution: the probability of their first discrepancy is bounded along the ideal trajectory. This accounts for errors accumulated in earlier insertions as well as errors introduced by the current one.

\section{Constructing the insertion flow}\label{sec:flow}

We now construct the families described in Section~\ref{sec:overview}. Their weights count outside completions, and their directions turn the signs of a paired inclusion--exclusion identity into nonnegative local flows.

\subsection{Choosing the variables of a mutation}

Recall that for a set of constraints $U$ we write $\vbl(U)=\bigcup_{d\in U}\vbl(d)$. All neighborhoods below are in the induced graph $G[J]$. For $U\subseteq J$, let $N(U)=N_{G[J]}(U)$ be the constraints outside $U$ that are adjacent to $U$, and write $N(e)=N(\{e\})$. As in the overview, $U$ is the mutable region and $C$ is the boundary where expansion has stopped. We call a constraint in $N(U)\setminus C$ \emph{vulnerable}.

Fix an order of the constraints. The following deterministic branching rule generates the skeletons and their associated mutable regions. Its output depends only on the formula, not on the current assignment:
\par
\begin{samepage}
\begin{enumerate}[leftmargin=*,label=\arabic*.]
\item Initialize $A = \{c\}$, $U=\{c\}\cup N(c)$ and $C=\varnothing$.
\item If a vulnerable constraint exists, take the least one $e$ in the
fixed order. In one branch, add $e$ to $C$ and leave $A,U$ unchanged. In the other, leave $C$ unchanged and expand the mutable region through $e$ by setting
\begin{eqnarray*}
 A & \leftarrow & A \cup \{e\} \\
 U & \leftarrow & U\cup\{e\}\cup(N(e)\setminus C)
\end{eqnarray*}
Continue in both branches.
\item When no vulnerable constraint remains, let $B=U\setminus A$ and output $(A,B)$.
\end{enumerate}
\end{samepage}
The pair $(A,B)$ records the completed expansion of the mutable region from $c$. Let $\Pairs(J,c)$ denote the set of pairs $(A,B)$ produced by the rule; by the following lemma, each is determined by its skeleton $A$.

\begin{lemma}[Structure]\label{lem:structure}
The constraints in $A$ have disjoint scopes, while the subgraph of $G$ induced by $A \cup B$ is connected.  Each expansion has a two-edge connection to $c$ or to an earlier expansion constraint, with intermediate constraint in $B$. The set $A$ determines $B$ and at termination $C=N(A\cup B)$.
\end{lemma}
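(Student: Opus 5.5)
The plan is to prove all four assertions together by induction over one execution path of the branching rule. The invariant to maintain after every step is:
\begin{enumerate}[label=(\roman*)]
\item $A\subseteq U$ and $c\in A$;
\item $A$ is independent in $G$;
\item $G[U]$ is connected;
\item each constraint of $U\setminus A$ is adjacent to some constraint of $A$;
\item $C\subseteq N(U)$;
\item $C$ is disjoint from $U$.
\end{enumerate}
Initially $A=\{c\}$, $U=\{c\}\cup N(c)$ and $C=\varnothing$, so all six conditions hold.

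In a stopping branch, $e$ is vulnerable, hence $e\in N(U)\setminus C$. Adding it to $C$ therefore preserves (v) and (vi), and nothing else changes.

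In an expanding branch, $e\in N(U)$ gives $e\notin U$, and $e$ is adjacent to some $u\in U$.
\begin{itemize}
\item By (iv), either $u\in A$ or $u$ is adjacent to some $a\in A$.
\item We need $e$ non-adjacent to every constraint of $A$. The key observation is that $N(a)\subseteq U\cup C$ for every $a\in A$: at initialization $N(c)\subseteq U$, and each expansion adds $N(e)\setminus C$ to $U$. Since $e\notin U\cup C$, $e$ is not adjacent to any $a\in A$, so independence (ii) is preserved.
\item In particular $u\notin A$, so $u\in B$ and $u$ is adjacent to some $a\in A$. The path $a,u,e$ is the required two-edge connection to $c$ or to an earlier expansion constraint, with intermediate constraint in $B$.
\item Connectivity (iii) is kept because $e$ is adjacent to $u\in U$ and the newly added neighbors are adjacent to $e$.
\item Condition (iv) is kept because every new member of $U\setminus A$ lies in $N(e)$.
\item The new members of $U$ avoid $C$ (we add $e\notin C$ and $N(e)\setminus C$), so (vi) is preserved; and $C$ is unchanged, so (v) still holds, since $N(U)$ only loses elements that now lie in $U$, and those were never in $C$.
\end{itemize}

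At termination there are no vulnerable constraints, so $N(U)\subseteq C$. Combined with (v), this gives $C=N(U)=N(A\cup B)$.

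The step I expect to be delicate is showing that $A$ determines $B$, i.e.\ that distinct executions yield distinct skeletons. I will replay the rule with the decision at each step read off from $A$: at the least vulnerable $e$, expand if $e\in A$ and stop otherwise. This is faithful because every constraint ever added to $C$ stays out of $A$, by (vi) together with $A\subseteq U$. Conversely, every constraint added to $A$ was vulnerable when chosen, and expansions add to $A$ only. Thus, by induction on steps, the replay reproduces the sequence of $(A,U,C)$ states of any execution whose output skeleton is $A$. Since the whole state is determined step by step, $B=U\setminus A$ is determined by $A$.

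The remaining point is that each step makes a decision about a new constraint, so the process terminates and the replay ends with the same final state. Termination holds because the set $U\cup C$ grows strictly at each step and the instance is finite.
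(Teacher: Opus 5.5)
Your proof is correct and follows essentially the same route as the paper's. You use the same ingredients: $A\subseteq U$ with $U\cap C=\varnothing$, every neighbor of a member of $A$ lying in $U\cup C$ (so a newly expanded $e$ is independent of $A$ and attaches through a member of $B$), $C$ persisting as external neighbors of the growing $U$, and the replay that expands exactly at members of $A$. Your version differs only in that it states these as explicit invariants, checks them step by step, and makes termination explicit.
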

\begin{proof}
Initially and by subsequent construction,  $A\subseteq U$ and $U\cap C=\varnothing$. The initial $U$ is connected, and each expansion adds some neighbor of $U$ and some of that neighbor's neighbors. Every neighbor of  a constraint already in $A$ lies in $U\cup C$. Hence a constraint $e$ selected for expansion, being outside $U\cup C$, is adjacent to none of them. This proves disjointness of the scopes in $A$. A neighbor of $e$ in the previous $U$ is not in $A$; it entered $U$ as a neighbor of an earlier member of $A$. Once in $U$ it cannot later be selected for expansion, so it belongs to $B$ and  supplies the stated two-edge path. Given a generated final $A$, replay the rule, expanding iff the selected constraint belongs to $A$. This recovers the branch and, thus, $B$. Every constraint put in $C$ remains an external neighbor of $U$, and
termination leaves no such neighbor outside $C$. Hence $C=N(U)$.
\end{proof}

Recall that $G^2$ joins vertices at distance at most two in $G$.
For $A\subseteq H$, write $G^2[A]$ for its induced graph on $A$. A skeleton is a
nonempty independent set $A$ of $G$ with $G^2[A]$ connected. By Lemma~\ref{lem:structure}, every generated $A$ is a skeleton of
$G$, even when the construction is run on a subformula. These are the
$2$-trees used in~\cite{Alon,WangYin,CountingLLL}. The structure has three uses. Disjoint skeleton scopes give the small probability factors in the tail bound; distance-two connectivity bounds the number of skeletons; and the partition of subsets by completed branches, proved below, gives the exact inclusion--exclusion grouping. The present rule
stops expansion at the boundary of the variables being changed.
Its boundary is different from that in the canonical construction
of~\cite{CountingLLL}, and we establish the corresponding grouping
identity below.
The skeleton measures the size of an expansion. The actual mutation
also includes the surrounding constraints in $B$. For a generated
pair define
\begin{equation}\label{eq:pairdata}
 \Lambda=\vbl(A\cup B),\qquad
 C=N_{G[J]}(A\cup B),\qquad M=J\setminus(A\cup B\cup C).
\end{equation}
Writing $F=J\setminus\{c\}$, we have
\begin{equation}\label{eq:separation}
 F=(A\setminus\{c\})\sqcup B\sqcup C\sqcup M,
 \qquad \vbl(M)\cap\Lambda=\varnothing.
\end{equation}

\subsection{The weights and directions of the mini-flows}

Fix a  skeleton $A$ and its associated pair $(A,B)$. A permitted mutation will have the same weight for every compatible choice of the immutable variables.
The relevant multiplicity is the number of outside assignments
satisfying $M$. Set
\begin{equation}\label{eq:h}
 h(A,B)=D^{-|\Lambda|}Z(M).
\end{equation}
Since $M$ does not mention the variables in $\Lambda$, its full
count $Z(M)$ includes $D^{|\Lambda|}$ arbitrary choices for those
variables. Dividing them out leaves exactly the number of satisfying
outside assignments. This number is positive because $J$ is satisfiable.

For a constraint $d$, the two-endpoint violation test on assignments
$x,u$ asks whether at least one of them violates $d$. Excluding this
event for every $d\in F$ is exactly the requirement that both
assignments satisfy $F$. This is the event to which we apply
inclusion--exclusion. The sets $A,B,C,M$ determine which tests occur
in each grouped term.

For full assignments $x,u\in\Omega$, let $q_{A,B}(x,u)$ indicate that
every constraint in $A\setminus\{c\}$ is violated by at least one of
$x,u$, and both satisfy $B\cup M$. There is no test on $c$ or $C$ in
this indicator. For $x,y\in\Omega$, assign $W_{A,B}(x,y)=h(A,B)$
exactly when they agree outside $\Lambda$, $q_{A,B}(x,y)=1$, and the
statuses of $c$ are
\begin{equation}\label{eq:orientation}
 c:\quad
 \begin{cases}
 \text{violated at }x,\ \text{satisfied at }y,& |A|\text{ odd},\\
 \text{satisfied at }x,\ \text{violated at }y,& |A|\text{ even}.
 \end{cases}
\end{equation}
All other weights are zero, and $W=\sum_{(A,B)\in\Pairs(J,c)}W_{A,B}$.
The  orientation of $c$ supplies the sign of the grouped
inclusion--exclusion term. We now verify the
combined balance of these mini-flows.

\begin{proposition}\label{prop:balance}
The flow $W=\sum_{(A,B)\in\Pairs(J,c)}W_{A,B}$ satisfies~\eqref{eq:balance}.
\end{proposition}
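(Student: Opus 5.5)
The plan is to verify~\eqref{eq:balance} pointwise at a fixed $x\in\Omega$. I first rewrite both sides as signed counts over partners $u\in\Omega$ whose $c$-status differs from that of $x$. Then the identity reduces to a signed inclusion--exclusion over the leaves of the branching rule, which I prove by induction up the branching tree.

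\emph{Step 1: the net flow of one family.} Put $s(x)=+1$ if $x$ violates $c$ and $s(x)=-1$ otherwise. By the orientation rule~\eqref{eq:orientation}, the out-arcs of $W_{A,B}$ at $x$ exist only when $s(x)=(-1)^{|A|+1}$, and the in-arcs only when $s(x)=(-1)^{|A|}$. In either case the partner $u$ agrees with $x$ outside $\Lambda$, has the opposite $c$-status, and satisfies $q_{A,B}(x,u)=1$; note that $q_{A,B}$ is symmetric in its arguments. Hence the net outflow of $W_{A,B}$ at $x$ is $(-1)^{|A|+1}s(x)\,h(A,B)\,N_{A,B}(x)$, where $N_{A,B}(x)$ counts such partners $u$. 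Now $\vbl(A\cup B)\subseteq\Lambda$ and $\vbl(M)\cap\Lambda=\varnothing$ by~\eqref{eq:separation}. So the conditions on $u$ inside $\Lambda$ and outside $\Lambda$ decouple, and the factor $h(A,B)=D^{-|\Lambda|}Z(M)$ exactly frees the outside part of $u$ to range over all completions satisfying $M$. Therefore $h(A,B)N_{A,B}(x)$ equals the number of $u\in\Omega$ such that $x,u$ both satisfy $B\cup M$, every $d\in A\setminus\{c\}$ is violated by $x$ or $u$, and $x,u$ differ on $c$.

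\emph{Step 2: the target as a pair count.} The right side of~\eqref{eq:balance} can be written as $\sum_u\bigl(\one_{\{x\models F\}}\one_{\{u\models J\}}-\one_{\{x\models J\}}\one_{\{u\models F\}}\bigr)$. This equals $s(x)$ times the number of $u$ that differ from $x$ on $c$ and satisfy $F$ together with $x$. Fix such a pair $(x,u)$ and let $S\subseteq F$ be the set of constraints of $F$ violated by $x$ or by $u$. The condition ``both satisfy $F$'' is $S=\varnothing$. By Step~1, the pair contributes to the family $(A,B)$ exactly when $A\setminus\{c\}\subseteq S$ and $S\cap(B\cup M)=\varnothing$. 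It therefore suffices to prove, for every $S\subseteq F$,
\[
 \sum_{(A,B)\in\Pairs(J,c)}(-1)^{|A|+1}\one_{\{A\setminus\{c\}\subseteq S,\ S\cap(B\cup M)=\varnothing\}}=\one_{\{S=\varnothing\}}.
\]

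\emph{Step 3: induction on the branching tree (the main step).} For a node of the rule with state $(A,U,C)$, let $O=J\setminus(U\cup C)$ be the constraints not yet classified. I claim that the signed sum over the leaves below this node equals
\[
(-1)^{|A|+1}\one_{\{A\setminus\{c\}\subseteq S\}}\one_{\{S\cap(U\setminus A)=\varnothing\}}\one_{\{S\cap O=\varnothing\}}.
\]
At a leaf, $O=M$ by Lemma~\ref{lem:structure}, so the claim is the leaf's own term. At an internal node with least vulnerable constraint $e\in O$ there are two cases.

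If $e\notin S$, the expand branch places $e$ in $A\setminus\{c\}\not\subseteq S$ and contributes $0$. The $C$-branch only removes $e$ from $O$, which leaves the indicator unchanged.

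If $e\in S$, the parent's value is $0$ because $e\in O\cap S$. Both children carry the same product of indicators: the expand branch moves $e$ into $A$ and moves $N(e)\setminus(U\cup C)$ from $O$ into $B$, so the union $B\cup O$ loses exactly $e$ in both branches. The two children have opposite signs, so they cancel.

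At the root we have $A=\{c\}$, $U\setminus A=N(c)$ and $O=J\setminus U$, and $c\notin S$. The claim then gives $\one_{\{S=\varnothing\}}$, which is the required identity. The delicate point is checking this bookkeeping of $B$ versus $O$ under expansion, since the newly added $B$-constraints must still be required to avoid $S$. I expect this to be the main obstacle, and it is handled by always tracking the union $B\cup O$ rather than the two sets separately.
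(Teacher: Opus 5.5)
Your proof is correct. Its outer structure matches the paper's. Step~1 is \eqref{eq:completiongoal}: you restrict to partners of opposite $c$-status and use $s(x)$ where the paper carries the factor $e(x)-e(u)$. Your Step~2 identity is \eqref{eq:groupinggoal} with $b_d=\one_{\{d\in S\}}$, which is all that is needed since the $b_d$ are indicators anyway.

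The difference is in how the grouping identity is proved. The paper expands $\prod_{d\in F}(1-b_d)$ by inclusion--exclusion. It shows that every index set has exactly one representation \eqref{eq:subset}, by replaying the rule and expanding precisely at selected constraints lying in that set. Each family then collects its own block of terms, and summing over $E\subseteq B\cup M$ produces the factors $1-b_d$. You instead prove a node invariant by induction up the branching tree, with the two siblings cancelling when $e\in S$.

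Your argument is local: it checks one branching step at a time and isolates the one fact everything rests on, that $B\cup O$ loses exactly $e$ in both branches. The paper uses the same fact implicitly when it says membership of the remaining constraints in $B\cup M$ is unrestricted. Your induction also gives the paper's full polynomial identity if you take the invariant to be $(-1)^{|A|+1}\prod_{a\in A\setminus\{c\}}b_a\prod_{d\in B\cup O}(1-b_d)$. At each node the $C$-branch contributes the factor $1$ for $e$ and the expansion branch the factor $-b_e$, which together give the parent's $1-b_e$. The paper's partition, in turn, gives an explicit term-by-term accounting that charges each inclusion--exclusion subset to exactly one family. This is slightly more information than the identity itself.

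Two small wording fixes. In Step~2, ``fix such a pair'' should read ``fix any pair $(x,u)$ with different $c$-statuses''; for pairs already satisfying $F$ one trivially has $S=\varnothing$. You should also note that distinct leaves give distinct pairs in $\Pairs(J,c)$, since siblings differ on whether $e$ enters $A$ or $C$. Then your root sum is the sum over generated pairs without multiplicity.
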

\begin{proof}
In this proof, sums over pairs range over the set $\Pairs(J,c)$ of generated pairs $(A,B)$, one for each generated skeleton $A$, and sums over assignments range over $\Omega$. Define $e(u)=\one_{\{u\not\models c\}}$
for every $u\in\Omega$, and fix $x\in\Omega$. The right side of~\eqref{eq:balance} equals
\begin{equation}\label{eq:targetsum}
 \sum_{u\in\Omega}(e(x)-e(u))
       \one_{\{x\models F\}}\one_{\{u\models F\}}.
\end{equation}
To check this, suppose first that $x\models F$. Among the $Z(F)$ assignments $u\models F$, exactly $Z(F)-Z(J)$ violate $c$. The sum is therefore $e(x)Z(F)-(Z(F)-Z(J))=Z(J)-(1-e(x))Z(F)$, as required. If $x$ violates $F$, every summand is zero. The factor $e(x)-e(u)$ records which endpoint violates the root; the other two indicators require both endpoints to satisfy the old constraints.

We recover this sum from local mutations in two steps. First, the weight $h(A,B)$ accounts for the possible outside  values of the second assignment. Second, inclusion--exclusion over the generated pairs restores the requirement that both assignments satisfy $F$. Thus it is enough to prove
\begin{align}
 \sum_y\bigl(W_{A,B}(x,y)-W_{A,B}(y,x)\bigr)
 &=(-1)^{|A|-1}\sum_u(e(x)-e(u))q_{A,B}(x,u),
       \label{eq:completiongoal}\\
 \sum_{(A,B)}(-1)^{|A|-1}q_{A,B}(x,u)
 &=\one_{\{x\models F\}}\one_{\{u\models F\}}.
       \label{eq:groupinggoal}
\end{align}
Summing~\eqref{eq:completiongoal} over the pairs and applying
\eqref{eq:groupinggoal} gives~\eqref{eq:targetsum}.

Fix a generated pair. For local values $\xi\in[D]^\Lambda$, put
$y_\xi=x[\Lambda\leftarrow\xi]$, the assignment equal to $\xi$ on
$\Lambda$ and to $x$ outside $\Lambda$. The orientation gives
\[
 W_{A,B}(x,y_\xi)-W_{A,B}(y_\xi,x)
 =(-1)^{|A|-1}h(A,B)(e(x)-e(y_\xi))q_{A,B}(x,y_\xi).
\]
For a violated-to-satisfied root, $e(x)-e(y_\xi)=1$. The right side is then $+h(A,B)$ for an odd skeleton, corresponding to an outgoing arc, and $-h(A,B)$ for an even skeleton, corresponding to an incoming arc, provided the remaining tests pass. Reversing the endpoints reverses the sign. It is the net flow that can be negative, never an arc weight.

If $x$ violates $M$, all relevant terms vanish. Otherwise, fix  $u|_\Lambda=\xi$. The status of $c$ and all tests on $A\cup B$ are now fixed. The only remaining restriction on the outside values of $u$ is satisfaction of $M$, with exactly $h(A,B)$ possible completions. Since $x$ satisfies $M$ and $y_\xi$ agrees with $x$ outside $\Lambda$, $y_\xi$ is one of these completions, and all have the same value of the local indicator. Hence
\[
 \sum_{u:u|_\Lambda=\xi}(e(x)-e(u))q_{A,B}(x,u)
 =h(A,B)(e(x)-e(y_\xi))q_{A,B}(x,y_\xi).
\]
Summing over $\xi$ proves~\eqref{eq:completiongoal}. For each permitted choice of local values there is only one actual destination $y_\xi$, since its outside values must remain those of $x$. Its weight $h(A,B)$ represents all $h(A,B)$ possible outside completions of the summation variable $u$, which give the same tests and sign. Thus the outside values are summed out.

It remains to prove~\eqref{eq:groupinggoal}. The underlying identity
is ordinary inclusion--exclusion, with its subsets grouped according
to the expansion procedure. For arbitrary numbers $(b_d)_{d\in F}$,
we claim that
\[
 \prod_{d\in F}(1-b_d)=
 \sum_{(A,B)}(-1)^{|A|-1}
 \prod_{a\in A\setminus\{c\}}b_a
 \prod_{d\in B\cup M}(1-b_d).
\]
Taking $b_d$ to be the indicator of the two-endpoint violation test
for $d$ gives~\eqref{eq:groupinggoal}. For the polynomial identity, expand the
left side by inclusion--exclusion and group its subsets $S\subseteq F$
according to the vulnerability rule. Once a generated skeleton $A$, and hence its associated pair $(A,B)$, is fixed, the set $S$ must contain the expansion constraints $A\setminus\{c\}$ and
omit the constraints at which expansion stopped, namely $C$.
Membership of the remaining constraints in $B\cup M$ is unrestricted.
We claim that these are exactly the subsets producing this pair;
in other words, every $S\subseteq F$ has exactly one representation
\begin{equation}\label{eq:subset}
 S=(A\setminus\{c\})\sqcup E,
 \qquad (A,B)\in\Pairs(J,c),\quad E\subseteq B\cup M(A,B).
\end{equation}
To see this, given $S$, run the vulnerability rule and expand through a selected
constraint exactly when it belongs to $S$. Then
$A\setminus\{c\}\subseteq S$.
Every member of $C$ enters at a branching step whose selected constraint
lies outside $S$. Since $C$ starts empty, $C\cap S=\varnothing$;
by~\eqref{eq:separation}, the remainder is in $B\cup M$.
Conversely, for any generated pair and any such $E$, membership in the
right side of~\eqref{eq:subset} reproduces every branching choice:
expansion constraints belong, while constraints put in $C$ do not.
Induction along the branch therefore recovers $(A,B)$, after which
$E=S\setminus(A\setminus\{c\})$. This proves uniqueness.

In the part indexed by $(A,B)$, the expansion sign is $(-1)^{|S|}=(-1)^{|A|-1+|E|}$. There are $|A|-1$ compulsory members, not $|A|$, because the inserted constraint $c$ does not belong to $F$. The contribution of this part is
\[
 \begin{aligned}
 &\sum_{E\subseteq B\cup M}(-1)^{|A|-1+|E|}
        \prod_{a\in A\setminus\{c\}}b_a\prod_{d\in E}b_d\\
 &\qquad=(-1)^{|A|-1}\prod_{a\in A\setminus\{c\}}b_a
                  \prod_{d\in B\cup M}(1-b_d).
 \end{aligned}
\]
A constraint in $B\cup M$ may be absent or present in $E$; summing these two choices produces $1-b_d$. A constraint in $C$ is absent from every subset in this part, so it supplies no factor, rather than a factor $1-b_d$. Its satisfaction is not required by this family; the combined sum over families restores all the old constraints. This proves the polynomial identity, then~\eqref{eq:groupinggoal}, and hence the proposition.
\end{proof}

This completes the ideal sampler. For every satisfiable $J$, the
flow above and Lemma~\ref{lem:absorption} transform $\mu_F$ into
$\mu_J$. Starting with $\mu_{\varnothing}$ and composing the
insertions gives $\mu_H$. No local-lemma hypothesis has been needed
for this conclusion. The remaining question is efficiency, since the
family of possible mutations of the ideal sampler can be exponentially large.

\section{An exponential tail for the insertion process}\label{sec:tail}

We will enumerate only mutations whose skeletons have logarithmic
size. To justify this, it is enough to show that the ideal sampler
rarely chooses a large mutation anywhere in its execution.
Equation~\eqref{eq:usage} reduces this issue to a sum of flow
weights. Throughout this section assume $8pQ\le1$.  Recall that $\Pairs(J,c)$ is the set of generated pairs $(A,B)$, indexed by their skeletons $A$, and $W_{A,B}$ is the corresponding mini-flow. 
\begin{proposition}\label{prop:tail}
For every constraint insertion $F\to J=F\cup\{c\}$ and integer $\ell\ge0$,
\begin{equation}\label{eq:tail}
 \frac1{Z(F)Z(J)}
 \sum_{\substack{(A,B)\in\Pairs(J,c)\\|A|>\ell}}
        \|W_{A,B}\|_1
 \le4p(23/25)^\ell.
\end{equation}
Consequently the complete exact process takes  at
most $m(1+4p)$ steps in expectation, and its probability of ever choosing $|A|>\ell$ is at
most $4mp(23/25)^\ell$.
\end{proposition}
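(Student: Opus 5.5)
We would prove the tail bound~\eqref{eq:tail} one skeleton at a time: compute $\|W_{A,B}\|_1$ exactly as a product of a global count and a local probability, bound the global count by an LLL-type comparison, and then sum over skeletons using connectivity in $G^2$. The two consequences then follow from~\eqref{eq:usage}.

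\emph{Step 1: factorize the family weight.} Fix $(A,B)\in\Pairs(J,c)$. By~\eqref{eq:separation}, an arc of $W_{A,B}$ is determined by three choices. The first is an outside assignment to $V\setminus\Lambda$ satisfying $M$; there are $h(A,B)$ of these. The second and third are local values $\xi,\xi'\in[D]^\Lambda$, which must satisfy $B$ and the two-endpoint tests on $A\setminus\{c\}$, with $c$ changing status in the prescribed direction. Each arc has weight $h(A,B)$. Hence
\[
 \|W_{A,B}\|_1 = h(A,B)^2\, D^{2|\Lambda|}\,\Pr_{\xi,\xi'}[\text{local tests}]
 \le Z(M)^2\,\Pr_{\xi,\xi'}[\text{local tests}],
\]
where $\xi,\xi'$ are independent and uniform on $[D]^\Lambda$. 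We drop the requirement that $B$ be satisfied. By Lemma~\ref{lem:structure} the scopes of the constraints in $A$ are disjoint, so the events ``$a$ is violated by $\xi$ or by $\xi'$'' are independent over $a\in A$. Each has probability at most $2p$, and for $c$ we only need that exactly one endpoint violates it. This gives
\[
 \|W_{A,B}\|_1\le Z(M)^2(2p)^{|A|}.
\]

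\emph{Step 2: compare $Z(M)$ with $Z(F)$ and $Z(J)$.} Since $M\subseteq F\subseteq J$, we have $Z(M)^2/(Z(F)Z(J))\le (Z(M)/Z(J))^2$. The ratio $Z(J)/Z(M)$ is the probability, under $\mu_M$, that all the deleted constraints $A\cup B\cup C$ hold. The plan is to use the standard LLL lower bound under $8pQ\le1$, with an assignment $x_d=\theta p$ for a suitable constant $\theta$ (about $2$). This gives $\Pr_{\mu_M}[\text{all } d\in A\cup B\cup C \text{ hold}]\ge(1-\theta p)^{|A\cup B\cup C|}$; the condition $8pQ\le1$ is exactly what makes the LLL inequalities $p\le x_d\prod(1-x_{d'})$ hold. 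By Lemma~\ref{lem:structure}, $C=N(A\cup B)$ and $B\subseteq N(A)$, so every deleted constraint lies within distance two of $A$, and $|A\cup B\cup C|\le Q|A|$. Therefore
\[
 \frac{\|W_{A,B}\|_1}{Z(F)Z(J)}\le (2p)^{|A|}(1-\theta p)^{-2Q|A|}\le \bigl(2p\,e^{c_0}\bigr)^{|A|}
\]
for an absolute constant $c_0$ coming from $2\theta pQ\le\theta/4$.

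\emph{Step 3: count skeletons.} Skeletons containing $c$ are connected subsets of size $r$ in an auxiliary graph. Because $A$ is independent, consecutive members are at distance exactly two, so this graph has maximum degree at most $\Delta_2\le Q$. The standard tree-counting bound then gives at most $(eQ)^{r-1}$ such sets. Summing over $r>\ell$ yields a geometric series
\[
 2p\,e^{c_0}\sum_{r>\ell}\bigl(2e^{1+c_0}pQ\bigr)^{r-1}.
\]
We need the ratio $2e^{1+c_0}pQ$ to be at most $23/25$ and the prefactor to be at most $4p$, after dividing by $1-23/25$. \textbf{This is the main obstacle.} The crude constants just computed are too large, since $2e^{1.5}/8>1$. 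To close the gap we would:
\begin{itemize}[leftmargin=*]
\item replace the bound $(eQ)^{r-1}$ by the sharper count $\binom{(Q')r}{r-1}/r$ that is valid for trees in degree-$Q'$ graphs, with $Q'\le Q-1$;
\item charge the LLL loss more carefully, counting each deleted constraint once rather than $Q$ times per skeleton member, and using $(1-x)^{-1}\le e^{x+x^2}$ with small $x$;
\item exploit that $c$ itself contributes only $p$ rather than $2p$, because of its orientation.
\end{itemize}
We would try these refinements in that order until the constant $23/25$ comes out.

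\emph{Step 4: consequences.} Apply~\eqref{eq:usage} with $M_*=Z(F)Z(J)$. The expected number of non-stopping steps in one insertion equals the expected number of family selections, which by the bound with $\ell=0$ is at most $4p$. The stopping step adds exactly one more. Summing over the $m$ insertions gives at most $m(1+4p)$ expected steps. The bound on $\Pr[K_{\mathcal A}\ge1]$, applied to the families with $|A|>\ell$ and summed over the $m$ insertions, gives probability at most $4mp(23/25)^\ell$ of ever choosing such a skeleton.
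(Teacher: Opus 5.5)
Steps 1, 2 and 4 are essentially the paper's argument. The single-family bound $p\kappa(2p\kappa)^{|A|-1}$ comes from the disjoint skeleton scopes together with the conditional local lemma, in the form $\Pr[\neg d\mid K]\le t$ for every subformula $K$, chained over the at most $Q|A|$ restored constraints. The two consequences follow from~\eqref{eq:usage} exactly as you say.

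The gap is in Step 3, and it is not a matter of tuning constants. As long as you bound the sum over skeleton sizes by a geometric series, no choice of witnesses yields $4p$. With the root refinement and a count $(\mathrm e\Delta_2)^{r-1}$, your bound is $p\kappa\,\theta^\ell/(1-\theta)$ with $\theta=2\mathrm e\Delta_2p\kappa$. Already at $\ell=0$ you need $\kappa/(1-\theta)\le4$, hence $\theta\le1-\kappa/4\le3/4$. Now take all violation probabilities equal to $p=1/(8Q)$ and $\Delta_2/Q$ close to $1$. Every witness must exceed $p$, so $\kappa\ge(1-p)^{-2Q}\ge\mathrm e^{1/4}$ and $\theta\ge\mathrm e^{5/4}\Delta_2/(4Q)\approx0.87$. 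The geometric route therefore gives at best about $10p$. Counting deleted constraints more carefully does not rescue this, since the deleted region can genuinely contain about $Q|A|$ constraints. The factor $1/(1-23/25)=25/2$ you plan to divide by is simply not affordable.

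The missing idea is to keep the subexponential factor of the tree count, so that you never pay $1/(1-\theta)$. The paper bounds the number of size-$r$ skeletons by $\alpha_r\Delta_2^{r-1}$ with $\alpha_r=r^{r-1}/r!$. It then proves $\sum_{r\ge1}\alpha_r\mathrm e^{-(r-1)}\le\mathrm e$ using the exponential generating functions $T_0(z)=z$, $T_{j+1}(z)=z\exp(T_j(z))$ of rooted labeled trees of bounded height, for which $T_j(1/\mathrm e)\le1$ by induction. Write $\alpha_r(2p\kappa\Delta_2)^{r-1}=\alpha_r\mathrm e^{-(r-1)}\theta^{r-1}$ and use $\theta^{r-1}\le\theta^\ell$ for $r>\ell$. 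This bounds the tail by $\mathrm e p\kappa\,\theta^\ell$. With $t=3/(20Q)$ one then checks $\mathrm e\kappa<4$ and $\theta<23/25$. This last check uses $\Delta_2\le Q-1$; the bound $\Delta_2\le Q$ alone gives only $\theta\le\mathrm e\kappa/4$, which exceeds $23/25$ for small $Q$.

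Your first bullet is in fact the right ingredient, since $\frac1r\binom{Q'r}{r-1}\le\alpha_rQ'^{r-1}$. But it only closes the proof if you sum the resulting series directly at the critical scale $\mathrm e^{-(r-1)}$, instead of bounding it geometrically.
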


By~\eqref{eq:usage}, the left side bounds the expected total number
of selections from the displayed families in one constraint insertion. It
therefore also bounds the chance of any such selection. At $\ell=0$
all families are included, giving at most $4p$ expected mutations;
there is also one stopping step per constraint insertion. Each insertion starts with the uniform law on the preceding instance, supplied by the exact output of the previous insertion. Summing its expected-use bound over the $m$ insertions gives the factor $m$ in the proposition; their independence is not needed. The probability of any large-family selection is at most this expected total number of selections.

There are two parts to the proof: Section~\ref{sec:singlepair}
bounds the normalized weight of one family, and
Section~\ref{sec:skeletonsum} bounds the number of skeletons.
Fix a constraint insertion and set
\begin{equation}\label{eq:numeric}
 t=\frac3{20Q},\qquad \kappa=(1-t)^{-2Q}.
\end{equation}
The single-family estimate, for $r=|A|$, is
\begin{equation}\label{eq:onepair}
 \frac{\|W_{A,B}\|_1}{Z(F)Z(J)}
 \le p\kappa(2p\kappa)^{r-1}.
\end{equation}
The number $n_r$ of generated pairs with $|A|=r$ satisfies
\begin{equation}\label{eq:treecount}
 \alpha_r=\frac{r^{r-1}}{r!},\qquad n_r\le\alpha_r\Delta_2^{r-1},
\end{equation}
where the coefficients obey
\begin{equation}\label{eq:treeboundary}
 \sum_{r\ge1}\alpha_r\mathrm e^{-(r-1)}\le\mathrm e.
\end{equation}
The decay from the probability bound beats the growth in the number
of skeletons if $\theta=2\mathrm e\Delta_2p\kappa<1$. For our choices, Appendix~\ref{app:tailconstants} verifies that
\begin{equation}\label{eq:theta}
 \theta<23/25,\qquad \mathrm e\kappa<4.
\end{equation}
These bounds give the desired conclusion immediately: the left side
of~\eqref{eq:tail} is at most
\[
 p\kappa\sum_{r>\ell}\alpha_r
                  \mathrm e^{-(r-1)}\theta^{r-1}
 \le\mathrm e p\kappa(23/25)^\ell
 \le4p(23/25)^\ell.
\]
It remains to prove the stated bounds.

\subsection{The single-pair bound}\label{sec:singlepair}

Fix a generated pair $(A,B)$ with $|A|=r$, and let $L_{A,B}$ be the number of ordered pairs of assignments to $\Lambda$ that pass the root orientation and the tests on $A\cup B$. Each such local pair extends to a full arc by choosing one common outside assignment satisfying $M$. There are $h(A,B)$ choices, and every resulting arc itself has weight $h(A,B)$. Therefore
\[
 \|W_{A,B}\|_1
 =L_{A,B}\,
   \underbrace{h(A,B)}_{\text{common outside choices}}\,
   \underbrace{h(A,B)}_{\text{weight of each arc}}.
\]
The two endpoints agree outside $\Lambda$: there are not two independent outside assignments. One factor of $h(A,B)$ counts arcs, and the other is their weight.

To bound $L_{A,B}$, take two independent uniform assignments on $\Lambda$. All $D^{2|\Lambda|}$ local pairs are equally likely. Write $p_a=\Pr[\neg a]\le p$ for the product-measure violation probability of $a\in A$. The prescribed change in the status of $c$ has probability $p_c(1-p_c)\le p$: a fixed family permits only one direction. For each other skeleton constraint the test passes unless both assignments satisfy it, so its probability is $1-(1-p_a)^2=2p_a-p_a^2\le2p$. The scopes in $A$ are disjoint, in both copies, so these probabilities multiply. Dropping the satisfaction tests on $B$ can only increase the number of accepted local pairs. Consequently,
\[
 L_{A,B}\le D^{2|\Lambda|}p(2p)^{r-1}.
\]
Multiplying by $h(A,B)^2$ cancels the local-volume factor, since $D^{2|\Lambda|}h(A,B)^2=Z(M)^2$. We obtain
\[
 \|W_{A,B}\|_1\le p(2p)^{r-1}Z(M)^2.
\]
After division by $Z(F)Z(J)$, the small factor $p(2p)^{r-1}$ is accompanied by $Z(M)^2/(Z(F)Z(J))$. This ratio is at least one: $M$ has fewer constraints and therefore at least as many solutions. To bound it, we first count how many constraints were deleted.

The two deleted sets are
\[
 F\setminus M=(A\setminus\{c\})\sqcup B\sqcup C,
 \qquad J\setminus M=A\sqcup B\sqcup C.
\]
Every member of $B$ is a neighbor of a member of $A$, and each member
of $C$ is adjacent to $A\cup B$. These inclusions hold in $G$ as well
as $G[J]$, so
\begin{equation}\label{eq:volumes}
 |A\cup B\cup C|\le Qr,
 \qquad |\Lambda|\le k(\Delta+1)r.
\end{equation}

It remains to bound the increase in the counts caused by these
deletions. We use the following form of the conditional local lemma,
stated in our CSP notation.

\begin{theorem}[Conditional local lemma {\cite[Theorem~2.1]{HSS}}]\label{thm:conditional-lll}
Let $X$ have the uniform product distribution on $[D]^V$, and suppose
there are numbers $\xi_a\in(0,1)$, $a\in H$, such that
\[
 \Pr[X\not\models a]
 \le \xi_a\prod_{b\in N_1(a)}(1-\xi_b)
 \qquad\text{for every }a\in H.
\]
Then $\Pr[X\models K]>0$ for every $K\subseteq H$. Moreover, for
every $K\subseteq H$ and every $c\in H\setminus K$,
\[
 \Pr[X\not\models c\mid X\models K]
 \le \Pr[X\not\models c]
       \prod_{b\in K\cap N_1(c)}(1-\xi_b)^{-1}
 \le \xi_c.
\]
All neighborhoods are taken in the original dependency graph $G$.
\end{theorem}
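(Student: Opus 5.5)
The plan is the standard Erd\H{o}s--Lov\'asz--Spencer induction, run on $|K|$. I would prove two claims together:
\begin{itemize}
\item[(i)] $\Pr[X\models K]>0$;
\item[(ii)] the first inequality holds for every $c\in H\setminus K$.
\end{itemize}
The second inequality is a direct consequence of the hypothesis. Since $c\notin N_1(c)$, the hypothesis gives
\[
\Pr[X\not\models c]\prod_{b\in K\cap N_1(c)}(1-\xi_b)^{-1}\le \xi_c\prod_{b\in N_1(c)\setminus K}(1-\xi_b)\le\xi_c,
\]
because every remaining factor lies in $(0,1)$. So only (i) and the first inequality in (ii) need work.

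\textbf{Base case.} For $K=\varnothing$, claim (i) is trivial and (ii) holds with equality.

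\textbf{Inductive step.} Fix $K$ and $c\notin K$. Split $K=K_1\sqcup K_2$ with $K_1=K\cap N_1(c)$ and $K_2=K\setminus N_1(c)$.

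For (i), I would write
\[
\Pr[X\models K]=\prod_j\bigl(1-\Pr[X\not\models a_j\mid X\models\{a_1,\dots,a_{j-1}\}]\bigr).
\]
Each conditioning set is a proper subset of $K$ not containing $a_j$. The inductive hypothesis (ii) together with the second inequality bounds each conditional violation probability by $\xi_{a_j}<1$. Hence $\Pr[X\models K]\ge\prod_{a\in K}(1-\xi_a)>0$.

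For (ii), if $K_1=\varnothing$, the scope of $c$ is disjoint from $\vbl(K)$. Under the product measure, $\{X\not\models c\}$ is then independent of $\{X\models K\}$, and equality holds. Otherwise I would bound
\[
\Pr[X\not\models c\mid X\models K]=\frac{\Pr[X\not\models c,\ X\models K_1\mid X\models K_2]}{\Pr[X\models K_1\mid X\models K_2]}
\le\frac{\Pr[X\not\models c\mid X\models K_2]}{\Pr[X\models K_1\mid X\models K_2]}.
\]
The numerator equals $\Pr[X\not\models c]$, again by scope-disjointness of $c$ from $K_2$ and independence of coordinates. For the denominator, list $K_1=\{b_1,\dots,b_s\}$ and write it as
\[
\prod_{j=1}^{s}\bigl(1-\Pr[X\not\models b_j\mid X\models K_2\cup\{b_1,\dots,b_{j-1}\}]\bigr).
\]
Since $s\ge1$, each conditioning set is strictly smaller than $K$ and omits $b_j$. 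The inductive hypothesis (ii) with the second inequality bounds each conditional probability by $\xi_{b_j}$, so the denominator is at least $\prod_{b\in K_1}(1-\xi_b)$. Conditioning on $K_2$ is legitimate by (i) applied to the smaller set $K_2$.

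The main point needing care is organizing the induction so that every conditioning event already has positive probability when it is used. I would handle this by doing the induction on $|K|$ with (i) and (ii) proved simultaneously, invoking (ii) only for strictly smaller sets. The independence steps only use that constraints with disjoint scopes are functions of disjoint coordinates of a product measure. All neighborhoods are in the original graph $G$, so restricting to $K$ causes no issue.
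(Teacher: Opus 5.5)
Your proof is correct. It is the standard Erd\H{o}s--Lov\'asz induction on $|K|$: you prove positivity and the conditional bound together, and you split $K$ into $K\cap N_1(c)$ and its scope-disjoint remainder. This is the argument behind the Haeupler--Saha--Srinivasan theorem, which the paper cites rather than reproving. That theorem conditions on the whole family, and the version for arbitrary $K\subseteq H$ follows by applying it to the subfamily $K$; your direct induction on $|K|$ covers this case automatically.
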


Deleting constraints only shrinks the neighborhoods, so the same
numbers $\xi_a$ satisfy the local-lemma hypothesis for every
subformula. Here we take $\xi_a=t=3/(20Q)$ for every $a\in H$.
Bernoulli's inequality and $\Delta\le Q-1$ give
\[
 (1-t)^{\Delta}\ge17/20,\qquad
 t(1-t)^{\Delta}\ge51/(400Q)>1/(8Q)\ge p.
\]
Since $|N_1(a)|\le\Delta$, this verifies the theorem's hypothesis:
$\Pr[X\not\models a]\le p\le t(1-t)^{\Delta}
\le t(1-t)^{|N_1(a)|}$.
The theorem therefore
implies that every subformula is satisfiable and that restoring a constraint eliminates at most a fraction $t$ of its solutions. For a subformula $K$ and a constraint $d\notin K$, the count identity behind this statement is
\[
 Z(K\cup\{d\})
 =Z(K)\Pr_{X\sim\mu_K}[X\models d]\ge(1-t)Z(K).
\]
At least a $1-t$ fraction of the current solutions survive each restoration. Apply the same conditional bound after each new constraint has been restored, and multiply these survival fractions; no independence between restorations is assumed. This gives
\begin{equation}\label{eq:conditional}
 \Pr[\neg c\mid K]\le t\quad(c\in H\setminus K),\qquad
 \frac{Z(K)}{Z(K')}\le(1-t)^{-|K'\setminus K|}
       \quad(K\subseteq K'\subseteq H).
\end{equation}
Here conditioning on $K$ means conditioning on all its constraints being satisfied. Put $s=|A\cup B\cup C|\le Qr$. From $M$ we must restore $s-1$ constraints to obtain $F$, and $s$ to obtain $J$, the extra constraint being $c$. Applying~\eqref{eq:conditional} to these two restoration sequences gives
\[
 \frac{Z(M)^2}{Z(F)Z(J)}
 \le (1-t)^{-(s-1)}(1-t)^{-s}
 \le (1-t)^{-2Qr}=\kappa^r.
\]
Combining this count-ratio bound with the local probability bound
proves~\eqref{eq:onepair}.

\subsection{Summing over skeletons}\label{sec:skeletonsum}

We fix the inserted constraint $c$. There are two facts to prove: the bound on the number of generated skeletons of each size, and the summability property of the coefficients in that bound.

An independent set connected in $G^2$ is connected in the graph joining
pairs at distance exactly two in $G$, of maximum degree $\Delta_2$.
Indeed, no two skeleton vertices are adjacent in $G$, so the edges connecting them in $G^2$ must all correspond to distance two. For an upper bound, we may count all connected $r$-vertex sets containing $c$ in this distance-two graph, even though not all of them are generated skeletons.

To prove~\eqref{eq:treecount} for $r\ge2$, choose a spanning tree for each set and label its vertices,
fixing root label $1$.
Choose just one spanning tree for each set, before assigning the labels. The constraint $c$ receives label $1$; the other $r-1$ labels can be assigned to the remaining constraints in any order.
Each set has $(r-1)!$ labelings.
We count these labeled realizations and then divide by $(r-1)!$. A realization records both a tree on the labels and the constraint carrying each label, so it determines the original vertex set. Thus different sets cannot contribute the same realization.

There are
$r^{r-2}$ labeled trees, and at most $\Delta_2^{r-1}$ vertex maps per
tree after fixing the root.
The first count is Cayley's formula with root label $1$ fixed. For the second, expose the images of the tree vertices starting from that root. Each child has at most $\Delta_2$ possible images adjacent to its parent's image, and there are $r-1$ children to place.
Allowing noninjective maps only increases the
bound.
Some counted maps identify distinct labels or have images that fail the independence or generation tests. We do not need to exclude them: every labeled realization of a generated skeleton is counted, and the extra maps only enlarge the upper bound. It follows that
\[
 n_r(r-1)!\le r^{r-2}\Delta_2^{r-1},
 \qquad
 \frac{r^{r-2}}{(r-1)!}=\frac{r^{r-1}}{r!}=\alpha_r.
\]
The case $r=1$ is immediate. A generated skeleton determines its
pair, so~\eqref{eq:treecount} also bounds the generated pairs.

We next show that the coefficients $\alpha_r$ remain summable after the exponential growth has been factored out. Keeping this coefficient information, rather than replacing it by a coarser exponential bound, is useful for the constant in the tail estimate.

For~\eqref{eq:treeboundary}, use the exponential generating functions $T_j$
of rooted labeled trees of height at most $j$:
$T_0(z)=z$ and $T_{j+1}(z)=z\exp(T_j(z))$.
Here height is the maximum number of edges from the root to a vertex, and the coefficient of $z^r$ is the number of such trees on $r$ labels divided by $r!$. Height zero permits only a single root, explaining $T_0(z)=z$. Removing the root of a tree of height at most $j+1$ leaves an unordered collection of rooted trees of height at most $j$. For $h$ branches, the contribution is $T_j(z)^h/h!$: products of exponential generating functions account for assigning disjoint labels to the branches, and the factor $1/h!$ removes their ordering. Summing over $h$ gives the exponential; multiplying by $z$ reinstates the root.

Their coefficients increase
to $\alpha_r$ in degree $r$, and $T_j(1/\mathrm e)\le1$ by induction.
For the coefficient assertion, the root may now carry any of the $r$ labels, unlike the fixed root label in the preceding map count. There are therefore $r\cdot r^{r-2}=r^{r-1}$ rooted labeled trees on $r\ge2$ vertices, giving coefficient $\alpha_r$; the singleton has coefficient one. Every such tree has height at most $r-1$, so all of them are included once $j\ge r-1$. For the bound at $1/\mathrm e$, the induction is
\[
 T_0(1/\mathrm e)=1/\mathrm e\le1,
 \qquad
 T_j(1/\mathrm e)\le1
 \ \Longrightarrow\ 
 T_{j+1}(1/\mathrm e)
   =\mathrm e^{-1}\exp(T_j(1/\mathrm e))\le1.
\]
This also justifies evaluating the limiting series at this point. For every integer $R\ge1$, the first $R$ coefficients have already reached their final values in $T_{R-1}$, and all coefficients are nonnegative. Hence
\[
 \sum_{r=1}^{R}\alpha_r\mathrm e^{-r}
 \le T_{R-1}(1/\mathrm e)\le1.
\]
Letting $R$ grow bounds the entire series by one; multiplying by $\mathrm e$ gives the normalization used in~\eqref{eq:treeboundary}.
Monotone convergence gives~\eqref{eq:treeboundary}.

In the summation at the beginning of Section~\ref{sec:tail}, these summable coefficients are multiplied by $\theta^{r-1}$. For $r>\ell$, this last factor is at most $(23/25)^\ell$. The remaining coefficient sum is at most $\mathrm e$, which gives the claimed tail without an additional geometric-series factor $1/(1-\theta)$.

This completes the proof of Proposition~\ref{prop:tail}.

\section{Approximate sampling}\label{sec:implementation}

We now implement the exact sampler with a finite computation. We
retain only small skeletons, use approximate  counts, and impose a
limit on the number of walk steps across all constraint insertions of a trial. Throughout this section assume
$8pQ\le1$ and $\Delta_2\ge2$.

A trial is a bounded execution of the sampler: it starts with a uniform product assignment, attempts all $m$ constraint insertions, and either returns a solution of $H$ or reports failure. We couple the trial to the ideal sampler and seek their first discrepancy along the ideal trajectory. After bounding the probability of that discrepancy, we remove the failure outcome using a bounded number of trials and, as a last resort, a fixed satisfying assignment.

\subsection{The truncated insertion process}\label{sec:trial}

Fix $0<\varepsilon\le1/2$, and set
\begin{equation}\label{eq:choices}
 L=\left\lceil\log_{25/23}\frac{16m}{\varepsilon}\right\rceil,
 \qquad \eta=\frac{\varepsilon}{32m}.
\end{equation}

These parameters address different costs. The cutoff $L$ limits the number and size of the mutation families we must generate; the accuracy $\eta$ limits the distortion of the choices within the retained families. The geometric tail makes logarithmic $L$ sufficient. We choose $\eta$ on the scale $\varepsilon/m$ because errors will be summed over the expected $O(m)$ steps of the ideal execution, rather than over the larger worst-case step cap imposed on the trial.

Retain only skeletons of size at most $L$, i.e., $|A|\le L$.
The only counts they need are $Z(F)$ for stopping and $Z(M)$ for
each retained mini-flow. Compute these to relative accuracy $\eta$
once, before running any trial. The mutation sets depend only on the
formula, so no new counting query is prompted by a sampled value.
All queries are unpinned, and all counts are positive by
\eqref{eq:conditional}. Round each estimate down to a multiple of $2^{-b}$, where $b=\lceil\log_2(1/\eta)\rceil$. Since every true count is at least one, rounding adds at most $\eta$ to its relative error. Reading and rounding an answer is included in its counting invocation. The stored estimates therefore satisfy
\begin{equation}\label{eq:rounded}
 (1-2\eta)Z(K)\le\widehat Z(K)\le(1+2\eta)Z(K).
\end{equation}
Use these same stored values in every trial, so repetition needs no further counting calls.

The precomputed objects and the state-dependent objects should be distinguished. For a fixed insertion and a fixed generated pair $(A,B)$, the sets $\Lambda,M$ and the count $Z(M)$ are already determined. The current assignment $x$ only determines which local replacements pass the arc tests. Finding those replacements requires predicate evaluations, not counts with the values of $x$ pinned. This is why every counting query can be fixed before the sampling randomness is drawn.

In the analysis we may regard the entire stored table as fixed, with any errors satisfying~\eqref{eq:rounded}. There is no assumption that its different errors cancel or are independent. The lower bound in~\eqref{eq:rounded} is positive, so an option has zero approximate weight exactly when it has zero exact weight. In particular, approximation changes the relative preferences among allowed moves, not which moves are allowed.

For a retained pair $(A,B)$ and a current assignment $x$, let
$\mathcal Y_{A,B}(x)$ be the possible destinations of its mini-flow.
They differ from $x$ only on $\Lambda$, so this set can be found by
local enumeration using the arc tests of Section~\ref{sec:flow}.
All its arcs have weight $h(A,B)$. The algorithm therefore chooses the family indexed by $(A,B)$
with weight $h(A,B)|\mathcal Y_{A,B}(x)|$, then chooses a uniform allowed
destination. For the exact walk this gives
\[
 \frac{h(A,B)|\mathcal Y_{A,B}(x)|}{d(x)}
 \frac1{|\mathcal Y_{A,B}(x)|}
 =\frac{W_{A,B}(x,y)}{d(x)}.
\]

The factor $|\mathcal Y_{A,B}(x)|$ is essential. A family with many allowed destinations carries the sum of all their arc weights, not the weight of just one arc. Choosing the family by this total and then choosing its destination uniformly recovers the prescribed probability of each labeled arc. If two families lead to the same assignment $y$, these are two ways to reach $y$, and their probabilities add. No disjointness of the destination sets is required.

Replacing each count by its stored approximation gives the following
weights for stopping and for the retained mutations:
\[
 \widehat w_{\mathrm{stop}}(x)
   =\widehat Z(F)\one_{\{x\models J\}},
 \qquad
 \widehat w_{A,B}(x)
   =D^{-|\Lambda|}\widehat Z(M)|\mathcal Y_{A,B}(x)|.
\]
An empty destination set has weight zero. As in the ideal walk, a step is either a mutation or a stopping choice. Set the limit for all insertions of one
trial together to
\begin{equation}\label{eq:decisioncap}
 T=\left\lceil\frac{16m}{\varepsilon}\right\rceil.
\end{equation}
A stopping choice ends the current constraint insertion, and after all $m$ insertions have stopped the trial returns its assignment, which satisfies $H$. The trial reports $\bot$ if taking another step would exceed $T$, or if the total available weight is zero.

The random choices reduce to uniform integer draws (Section~\ref{sec:resources}). Standard rejection sampling from a power-of-two range accepts each attempt with probability at least $1/2$. A trial makes at most $2T+1$ such draws, one for initialization and at most two per walk step. Cap each draw at $\lceil\log_2(16(2T+1)/\varepsilon)\rceil$ attempts, reporting $\bot$ if all fail. A union bound puts the probability of any such failure at most $\varepsilon/16$. Coupling capped draws to their uncapped versions leaves the choices identical unless this limit is reached.

\subsection{Error along the exact walk}\label{sec:trial-proof}

Let $\nu$ be the law of a trial, including possible output $\bot$,
and extend $\mu_H$ by $\mu_H(\bot)=0$.

The two laws now live on the same output space, consisting of satisfying assignments and $\bot$. Total variation accounts for both ways a trial can differ from the target: it may fail, or it may return solutions with slightly different probabilities. In particular, if $\delta=\TV(\nu,\mu_H)$, the event $\{\bot\}$ immediately gives $\nu(\bot)\le\delta$. We will still need a separate conditioning calculation to understand the output given success.

We prove
\begin{equation}\label{eq:trialbound}
 \TV(\nu,\mu_H)
 \le 4mp(23/25)^L+4\eta m(1+4p)
       +\frac{m(1+4p)}{T}+\frac{\varepsilon}{16}
 \le\frac{3\varepsilon}{8}.
\end{equation}
The four terms correspond to omitted mutations, approximate weights,
the step limit, and failure to produce a random draw within its allotted attempts.
Run the trial and the ideal sampler together, starting with the same
product assignment. As long as they agree, use a coupling of their
next choices that minimizes the chance they separate. If no
separation or cutoff occurs, their outputs agree. It is therefore
enough to bound the probability of the first such event.

Let $N$ be the total number of steps taken by the exact sampler across all $m$ constraint insertions, including the stopping step of each insertion.
Proposition~\ref{prop:tail} gives
\begin{equation}\label{eq:reference}
 \mathbb EN\le m(1+4p),\qquad
 \mathbb E[\text{selections with }|A|>L]
       \le4mp(23/25)^L.
\end{equation}
Let $\rho_i(x)$ be the probability that the exact process selects an omitted pair at state $x$ during the insertion of the $i$-th constraint. If the retained families and the stopping option have total exact weight zero, their approximate weights also sum to zero and the trial reports $\bot$. The exact process must then select an omitted family, so the probability of disagreement at this step is $1=\rho_i(x)$. Otherwise, conditioning the exact choice on a retained family or stopping changes the choice law by exactly $\rho_i(x)$ in total variation.

To see this directly, write $\rho=\rho_i(x)<1$. The omitted options have total probability $\rho$. Deleting them and renormalizing multiplies each retained probability by $1/(1-\rho)$, so the retained probabilities increase by a total of $\rho$. The sum of absolute changes is $2\rho$, and total variation is $\rho$.

This conditioning is a description of one choice at a fixed state, used only for the comparison. The algorithm need not compute $\rho$ or enumerate the omitted options. Nor are we claiming that the resulting walk is the ideal walk conditioned on never selecting a large family: that would condition on an entire future trajectory. Renormalizing the available options separately at each visited state is a different operation.

For a retained family $(A,B)$, replacing $Z(M)$ by $\widehat Z(M)$ multiplies every arc weight, and hence the total weight of the family, by $\widehat Z(M)/Z(M)$. The permitted destinations and their number are unchanged: they are determined by testing assignments, not by the counter. Thus the destination remains uniform within the chosen family. The stopping weight is multiplied by $\widehat Z(F)/Z(F)$. These factors can differ between families, but all lie in $[1-2\eta,1+2\eta]$ by~\eqref{eq:rounded}.

Index the retained families and the stopping option, when allowed, by $j$, writing $w_j$ for the exact weight  and $\widehat w_j=f_jw_j$ for the approximate weight, where $f_j\in[1-2\eta,1+2\eta]$. For the normalized laws on the retained options
$\pi_j=w_j/\sum_h w_h$ and
$\widehat\pi_j=\widehat w_j/\sum_h\widehat w_h$, put
$\bar f=\sum_j\pi_jf_j$. This is the ratio of the approximate total weight to the exact total weight, so $\widehat\pi_j=\pi_jf_j/\bar f$. Subtracting $\pi_j$ gives $\widehat\pi_j-\pi_j=\pi_j(f_j-\bar f)/\bar f$. Since both $f_j$ and their average $\bar f$ lie in $[1-2\eta,1+2\eta]$, we have $|f_j-\bar f|\le4\eta$ and $\bar f\ge1-2\eta$. Using also $\eta\le1/64$, we obtain
\begin{equation}\label{eq:one-step}
 \TV(\pi,\widehat\pi)
 =\frac1{2\bar f}\sum_j\pi_j|f_j-\bar f|
 \le\frac{2\eta}{1-2\eta}\le4\eta.
\end{equation}

There is no factor for the number of retained families in this estimate. The normalization error is their probability-weighted average $\bar f$, which stays in the same interval as the individual factors, no matter how many factors there are. The calculation permits all errors to point in an unfavorable direction. Thus we do not need to request proportionally greater accuracy when the number of generated families grows.

Given a common skeleton, the two processes use the same uniform
destination; a common stopping choice ends the same insertion.
Their one-step disagreement probability is therefore at most
$\rho_i(x)+4\eta$.

To sum the one-step errors, we use Proposition~\ref{prop:tail}, recalled in~\eqref{eq:reference}: it bounds the expected number of omitted selections during the complete ideal execution. This is exactly the expectation of the sum of the omission probabilities along that execution, even though any individual $\rho_i(x)$ can be large. Write $(I_j,X_j)$ for
the insertion and state before
exact step $j$, and let $O_j$ indicate selection of an omitted
pair, with $O_j=0$ after the exact process ends. At each existing step its conditional expectation is $\rho_{I_j}(X_j)$. Hence
\[
 \mathbb E\sum_{j<N}\rho_{I_j}(X_j)
 =\sum_{j\ge0}\mathbb E O_j
 =\mathbb E[\text{selections with }|A|>L].
\]
\par
Let $C_j$ be the event that the two executions still agree before
step $j$ and that no cutoff has yet occurred. A first choice
disagreement is possible only on this event, so its probability is at most
\[
 \mathbb E\sum_{j<N}\one_{C_j}
       \bigl(\rho_{I_j}(X_j)+4\eta\bigr).
\]
Dropping $\one_{C_j}$ only increases this nonnegative sum. We may
continue the ideal path with its prescribed random choices after the
comparison has stopped. Thus the reference path retains its
unconditional ideal law, and we obtain
\[
 \Pr[\text{a disagreement before a cutoff}]
 \le \mathbb E\sum_{j<N}\bigl(\rho_{I_j}(X_j)+4\eta\bigr)
 \le 4mp(23/25)^L+4\eta\mathbb EN.
\]

Conditioning on $C_j$ could bias the current assignment, but the
bound uses the complete ideal trajectory, not a trajectory under that
conditioning. In particular, we do not restart the proof at each
insertion assuming that the trial's current assignment is exactly
uniform: errors from earlier insertions are already included in the
first-discrepancy event. The accumulated count error is
$4\eta\mathbb EN$, not merely $4\eta T$. The step cap supplies a
worst-case clock, whereas the ideal expected length supplies the
sharper error budget.

The  step cutoff costs at most
$\Pr[N>T]\le\mathbb EN/T$.

Indeed, while choices and random draws agree, both processes have taken the same number of steps. If the ideal execution finishes within $T$ steps, a step-limit failure cannot be the first discrepancy. Thus the probability that the step cutoff is the first discrepancy is bounded using the ideal duration $N$, even though the implemented duration may have a different law after an earlier discrepancy.

Adding the probability, at most $\varepsilon/16$, that some random draw exhausts its allotted attempts proves the first inequality in
\eqref{eq:trialbound}. Substituting the chosen $L,\eta,T$ gives
$\varepsilon(1/4+p)\le3\varepsilon/8$, as $p\le1/8$.

The substitutions can be read one error at a time:
\[
\begin{array}{rcl}
 4mp(23/25)^L &\le& \varepsilon p/4,\\[2pt]
 4\eta m(1+4p) &=& \varepsilon(1+4p)/8,\\[2pt]
 m(1+4p)/T &\le& \varepsilon(1+4p)/16,\\[2pt]
 \text{random-draw failures} &\le& \varepsilon/16.
\end{array}
\]
The four displayed upper bounds sum to $\varepsilon(1/4+p)$. Thus we have both a small failure probability and a small unconditional output error, with slack left for conditioning on success and for returning an arbitrary satisfying assignment only as a last resort---the fallback assignment.

\subsection{Returning a satisfying assignment}\label{sec:trial-reduction}

A successful trial has a distribution close to $\mu_H$. To avoid
an unbounded wait for success, we allow only a fixed number of trials
and keep a satisfying assignment $x_*$ for the event that all of
them fail. Constructing the fallback assignment must have a guaranteed running-time bound: merely capping the randomized Moser--Tardos algorithm~\cite{MT} would leave a possibility of failure. Harris's deterministic LLL algorithm~\cite{Harris}
finds $x_*$ under our hypothesis in
$(n+m)^{O(k\log D)}D^{O(k)}$ time, without counting calls;
Appendix~\ref{app:fallback} gives the deduction. The probability
of using the fallback assignment will be small enough to preserve the sampling
guarantee.

Write $f=\nu(\bot)$ and let $\nu_+$ be the conditional law of a
successful trial. By~\eqref{eq:trialbound}, $f\le3\varepsilon/8<1/4$
and
\[
 \TV(\nu_+,\mu_H)
 \le\frac{\TV(\nu,\mu_H)}{1-f}
 \le\frac{6\varepsilon}{13}<\frac\varepsilon2.
\]

The conditioning step deserves its own calculation: successful trials need not be exactly uniform. Since $\nu(x)=(1-f)\nu_+(x)$ for a satisfying assignment $x$, and $\mu_H$ gives no mass to $\bot$, writing $\delta=\TV(\nu,\mu_H)$ gives
\[
 \sum_x|\nu(x)-\mu_H(x)|=2\delta-f.
\]
The sum here is over satisfying assignments only; the missing contribution $f$ is exactly the discrepancy at $\bot$. Consequently
\[
 \begin{aligned}
  2(1-f)\TV(\nu_+,\mu_H)
  &=\sum_x|\nu(x)-(1-f)\mu_H(x)|\\
  &\le\sum_x|\nu(x)-\mu_H(x)|+f\sum_x\mu_H(x)\\
  &=(2\delta-f)+f=2\delta.
 \end{aligned}
\]
Dividing by $2(1-f)$ proves the first inequality. For the numerical bound, $f\le3\varepsilon/8\le3/16$ gives $1-f\ge13/16$, and hence $\delta/(1-f)\le6\varepsilon/13$. Discarding failed trials therefore costs only this controlled renormalization; it does not make the remaining distribution exactly uniform.

Run at most
\begin{equation}\label{eq:tries}
 N_{\mathrm{try}}=\left\lceil\log_4\frac4\varepsilon\right\rceil
\end{equation}
independent trials and return the first successful output. If all
fail, return $x_*$. 

For completeness, returning the first success has exactly the same conditional law $\nu_+$ as returning one trial conditional on success. For any set $E$ of satisfying assignments and any $1\le t\le N_{\mathrm{try}}$,
\[
 \Pr[\text{first success at trial }t,\ \text{output}\in E]
     =f^{t-1}(1-f)\nu_+(E).
\]
Summing this geometric series over $t$ yields
$(1-f^{N_{\mathrm{try}}})\nu_+(E)$. Thus repetition changes the probability of producing a successful trial, but does not favor one successful output over another. The cached count estimates are the same in every trial; their errors need not be resampled. Once that table is fixed, fresh sampling randomness makes the trials independent with the same law $\nu$.

The probability that all trials fail is
$f^{N_{\mathrm{try}}}\le\varepsilon/4$, so the full output law
\[
 (1-f^{N_{\mathrm{try}}})\nu_+
       +f^{N_{\mathrm{try}}}\delta_{x_*}
\]
is within $\varepsilon/2+\varepsilon/4<\varepsilon$ of $\mu_H$.
Both parts of this mixture are supported on satisfying assignments.
The fallback assignment need not be random: its small probability of use is
already included in the error. All trials use the same precomputed
counts and fresh sampling randomness.

The final guarantee has two separate sources. Every possible output is a solution because a successful trial returns a solution and the fallback assignment is also a solution. Approximate uniformity follows because the successful part is close to uniform and the fallback assignment receives very little mass:
\[
 \TV\!\left((1-f^{N_{\mathrm{try}}})\nu_+
       +f^{N_{\mathrm{try}}}\delta_{x_*},\mu_H\right)
 \le (1-f^{N_{\mathrm{try}}})\TV(\nu_+,\mu_H)
       +f^{N_{\mathrm{try}}}.
\]
No typicality property of $x_*$ is required. Its construction is nevertheless part of the running time: the implementation computes it once, and pays that cost whether or not its value is eventually returned.

\subsection{Running time}\label{sec:resources}

The work has three parts: generate the retained skeletons, obtain their
counts once, and enumerate their permitted mutations when making a step. Put
\[
 P=\sum_{r=1}^{L}(4\Delta_2)^{r-1}.
\]
A retained skeleton is a connected set of at most $L$ vertices in
the exact-distance-two graph, contains the inserted constraint, and
determines its pair by Lemma~\ref{lem:structure}. There are at most
$P$ retained pairs per constraint insertion, and all can be enumerated in
$\operatorname{poly}(n,m,L)P$ time. Appendix~\ref{app:enumeration}
gives the enumeration.

There is one prefix count per constraint insertion and one exterior count per
retained pair, for a total of
\begin{equation}\label{eq:querycount}
 m(1+P)\le m(4\Delta_2)^L
       =(m/\varepsilon)^{O(\log\Delta_2)}
\end{equation}
invocations at relative accuracy $\eta=\varepsilon/(32m)$. The
queries are unchanged by repetition of the trial.

By~\eqref{eq:volumes}, each retained mutation involves at most
$K_{\rm var}=\min\{n,k(\Delta+1)L\}$ variables. Thus one walk step
requires at most $PD^{K_{\rm var}}$ local assignments to be tested,
with polynomial overhead for each. There are at most
$N_{\mathrm{try}}T$ steps in all.

Two different enumerations appear here. Generating a skeleton follows bounded-degree connections in the dependency graph and costs roughly $(4\Delta_2)^L$, rather than inspecting every $L$-element subset of the $m$ constraints. Once that skeleton is fixed, generating its destinations varies only the at most $K_{\rm var}$ mutable variables, rather than all $n$ variables. Since $L=O(\log(m/\varepsilon))$,
\[
 \begin{aligned}
 P&\le(m/\varepsilon)^{O(\log\Delta_2)},\\
 D^{K_{\rm var}}&\le(m/\varepsilon)^{O(k(\Delta+1)\log D)}.
 \end{aligned}
\]
These bounds explain the parameter-dependent exponent in the theorem. They give polynomial work in $n,m,1/\varepsilon$ for fixed structural parameters; they do not assert a running time polynomial in all structural parameters simultaneously. The bound of $N_{\mathrm{try}}T$ now makes this per-step cost a worst-case bound, rather than an expected one.

The stored counts have $O(n\log D+\log(1/\eta))$ bits. Multiplying all weights at a step by $2^bD^n$ makes them integers without changing the choice probabilities. The sum of these integers has $O(n\log D+\log(1/\eta)+\log(P+1))$ bits, so a proportional choice is a uniform integer draw into consecutive intervals of those lengths. Uniform destination and initial-assignment draws also have ranges of polynomial bit length. Thus arithmetic and the logarithmic retry cap add only polynomial overhead. Including the one deterministic
search, the total number of additional bit operations and predicate evaluations is at most
\begin{equation}\label{eq:work}
 \operatorname{poly}(n,m,k,L,\log D,1/\varepsilon)\,P D^{K_{\rm var}}
       +(n+m)^{O(k\log D)}D^{O(k)}.
\end{equation}
Since
$L=O(\log(m/\varepsilon))$ and
$\Delta_2\le\Delta(\Delta-1)$, this is
$(n+m/\varepsilon)^{O(k\Delta\log D)}$. Together with
\eqref{eq:querycount} and the output bound, this proves
Theorem~\ref{thm:oracle}.

\section{Asymmetric sampling}\label{sec:asymmetric}

We prove Theorem~\ref{thm:asymmetric}. Only the tail estimate and the
construction of the fallback assignment need to change. The insertion
flow is unchanged, and all neighborhoods below are in the ambient graph
$G$.

\paragraph{The family weights.}
Fix witnesses $x_c$ as in the theorem and put $t_c=Cx_c$. Since
\[
 p_c\le Cx_c\prod_{b\in\mathcal B(c)}(1-x_b)
 \le t_c\prod_{b\in N_1(c)}(1-t_b),
\]
Theorem~\ref{thm:conditional-lll} gives positivity of every count and
$Z(K\cup\{b\})\ge(1-t_b)Z(K)$ for $b\notin K$. Define
\[
 \kappa_a=\prod_{b\in\mathcal B(a)}(1-Cx_b)^{-2},
 \qquad w_a=2p_a\kappa_a.
\]
The deleted-region inclusion from~\eqref{eq:volumes} is
$J\setminus M\subseteq\bigcup_{a\in A}\mathcal B(a)$.
Restoring these constraints as in Section~\ref{sec:singlepair} bounds
$Z(M)^2/[Z(F)Z(J)]$ by $\prod_{a\in A}\kappa_a$; overlaps only
increase this upper bound. Keeping the individual probabilities in the
same local endpoint calculation gives
\[
 \frac{\|W_{A,B}\|_1}{Z(F)Z(J)}
 \le p_c\kappa_c\prod_{a\in A\setminus\{c\}}2p_a\kappa_a
 =\frac12\prod_{a\in A}w_a.
\]

\paragraph{A weighted tree comparison.}
We use a weighted version of the enumeration in
Section~\ref{sec:skeletonsum}. If nonnegative vectors $\omega,v$
satisfy
\[
 \omega_a\prod_{b\in N_2(a)}(1+v_b)\le v_a\qquad(a\in H),
\]
then
\begin{equation}\label{eq:asym-tree-comparison}
 \sum_{(A,B)\in\Pairs(J,c)}\prod_{a\in A}\omega_a\le v_c
 \qquad(c\in J\subseteq H).
\end{equation}
To see this, choose a rooted spanning tree of each generated skeleton in
the ambient exact-distance-two graph, using fixed tie-breaking. Its
label set recovers the skeleton and hence the pair. Enlarge this tree
family by allowing labels to repeat at different depths or in different
branches, but at most one child of each type $b\in N_2(a)$ at a node
of type $a$. The total weight of such trees of height at most $h$ obeys
\[
 T_a^{(-1)}=0,\qquad
 T_a^{(h+1)}=\omega_a\prod_{b\in N_2(a)}(1+T_b^{(h)})\le v_a
\]
by induction. Increasing $h$ proves~\eqref{eq:asym-tree-comparison}.
The comparison holds for every deletion subinstance using the same
ambient neighborhoods, and will also be used in the counting proof.

\paragraph{The asymmetric tail.}
Set $\beta=1-2C$, $y_a=\beta x_a$, and $z=\beta/(2C)>1$.
The identity
\[
 (1-Cu)^2-(1-u)(1+\beta u)=(1-C)^2u^2\ge0
 \qquad(0\le u\le1)
\]
pays for both the deletion factors and the distance-two branches:
applying it to each factor in the product hypothesis gives
\[
 p_a\kappa_a\prod_{b\in N_2(a)}(1+y_b)\le Cx_a,
 \qquad zw_a\prod_{b\in N_2(a)}(1+y_b)\le y_a.
\]
For factors outside $N_2(a)$, the factor $1+\beta u$ is unnecessary.
Apply~\eqref{eq:asym-tree-comparison} with $\omega_a=zw_a$ and
$v_a=y_a$, and divide by $2z$:
\[
 \sum_{(A,B)\in\Pairs(J,c)}z^{|A|-1}
       \frac{\|W_{A,B}\|_1}{Z(F)Z(J)}
 \le\frac{y_c}{2z}=Cx_c.
\]
The mass restricted to $|A|>\ell$ is consequently at most
$Cx_cz^{-\ell}$. By~\eqref{eq:usage}, the complete ideal execution has
at most $(1+C)m$ expected steps and at most $Cmz^{-L}$ expected
selections of omitted families.

\paragraph{Implementation and the fallback assignment.}
Keep $\eta,T$ and the implementation of Section~\ref{sec:implementation},
changing only the skeleton cutoff to
$L=\lceil\log_z(16m/\varepsilon)\rceil$.
The proof of~\eqref{eq:trialbound} now gives
\[
 \TV(\nu,\mu_H)
 \le Cmz^{-L}+4\eta(1+C)m+\frac{(1+C)m}{T}
               +\frac{\varepsilon}{16}
 \le\frac{1+C}{4}\varepsilon<\frac{3\varepsilon}{8}.
\]
Thus conditioning, repetition, and enumeration give the claimed bounds
with $L=O_C(\log(m/\varepsilon))$, once the fallback assignment is
available. For $C=23/96$, even the original cutoff is unchanged.

For its construction, set $S=\{a\}\cup N_1(a)$. Because
$S\subseteq\mathcal B(b)$ for every $b\in S$, the product hypothesis
implies, with $s=\sum_{b\in S}x_b$,
\[
 \sum_{b\in S}p_b
 \le Cs\prod_{d\in S}(1-x_d)
 \le Cs\mathrm e^{-s}\le C/\mathrm e.
\]
Also $p_b\le C/4$. Use $a_{\rm H}$ from Appendix~\ref{app:fallback},
so $q_b=p_b^{1-a_{\rm H}}\le2^{1/4}p_b$. The witnesses $u_b=2p_b$
satisfy
\[
 u_b\prod_{d\in N_1(b)}(1-u_d)
 \ge2p_b(1-2C/\mathrm e)>2^{1/4}p_b\ge q_b.
\]
Harris's asymmetric bound~\cite[Proposition~2.3(3)]{Harris} gives
\[
 W_{a_{\rm H}}\le\sum_b\frac{u_b}{(1-u_b)q_b}<\frac{16m}{7}.
\]
Here $W_{a_{\rm H}}$ is the same work parameter used in
Appendix~\ref{app:fallback}; the summand bound is the asymmetric
counterpart of the symmetric bound $\mathrm e$ used there. Indeed,
$u_b<1/8$ and $q_b\ge p_b$. The same runtime
formula~\cite[Theorem~3.5]{Harris} and local-table calculation give
$(n+m)^{O(k\log D)}D^{O(k)}$ time, without counting calls. This proves
Theorem~\ref{thm:asymmetric}.

\section{Counting for the asymmetric sampler}\label{sec:asymmetric-counting}

We prove Theorem~\ref{thm:asymmetric-counting} by adapting the counting
proof of Liu et al.~\cite{CountingLLL}. In particular, we use the
unconditional-coefficient marginal expansion in their Lemma~5.1 and
the size-budget recursion of their Section~4. We write the expansion
using our generated pairs by taking the diagonal of
\eqref{eq:groupinggoal}; the weighted estimates then extend their
stability argument to the product condition. This section supplies the
deletion counts needed by the asymmetric sampler where their stated
symmetric counting theorem does not apply.

\subsection{The marginal recursion}

For $c\in J\subseteq H$, put $F=J\setminus\{c\}$ and
$\beta_{J,c}=1-Z(J)/Z(F)$. For $(A,B)\in\Pairs(J,c)$, use the existing
$\Lambda,M$, and define the local coefficient
\[
 \lambda_{A,B}=\Pr_{\xi\sim\mathrm{Unif}([D]^\Lambda)}
 [\xi\text{ violates every constraint in }A
             \text{ and satisfies every constraint in }B]
 \le\prod_{a\in A}p_a.
\]
Set $u=x$ in~\eqref{eq:groupinggoal}, multiply by
$\one_{\{x\not\models c\}}$, and average over the product distribution.
The left-hand expectation is $(Z(F)-Z(J))/D^n$. By
\eqref{eq:separation}, the local tests on $\Lambda$ are independent
of satisfaction of $M$, so the term for $(A,B)$ has expectation
$\lambda_{A,B}Z(M)/D^n$, with its prescribed sign. Dividing by
$Z(F)/D^n$ gives
\[
 \beta_{J,c}=\sum_{(A,B)\in\Pairs(J,c)}
       (-1)^{|A|-1}\lambda_{A,B}\frac{Z(M)}{Z(F)}.
\]
For this term, list $F\setminus M=\{d_1,\ldots,d_s\}$ in the fixed order
and put $J_i=M\cup\{d_1,\ldots,d_i\}$. Telescoping the ratio gives
\begin{equation}\label{eq:asym-count-recursion}
 \beta_{J,c}=\sum_{(A,B)\in\Pairs(J,c)}
 (-1)^{|A|-1}\lambda_{A,B}
       \prod_{i=1}^{s}(1-\beta_{J_i,d_i})^{-1}.
\end{equation}
Every $J_i\subseteq F$ is a proper subformula of $J$.
The coefficient requires only one local assignment enumeration, not a
counting call.

We will keep each recursive estimate in an interval known to contain
the true marginal, so we need a computable upper bound. The witnesses
give the bound $t_c=Cx_c$, as in Section~\ref{sec:asymmetric}, but the
algorithm does not know them.
Compute the $p_c$ from their local tables and set
\[
 \overline t_c=p_c\prod_{b\in N_1(c)}(1-2p_b)^{-1}.
\]
The first-neighborhood estimate in Section~\ref{sec:asymmetric} gives
$\sum_{b\in N_1(c)}p_b\le C/\mathrm e$, and hence
$\overline t_c\le p_c/(1-2C/\mathrm e)<2p_c$. Consequently
\[
 \overline t_c\prod_{b\in N_1(c)}(1-\overline t_b)
 \ge\overline t_c\prod_{b\in N_1(c)}(1-2p_b)=p_c.
\]
Thus $\overline t$ is itself a conditional-LLL witness. Moreover,
$2p_b\le2Cx_b<x_b$, so the product hypothesis also gives
\[
 \overline t_c
 \le Cx_c\frac{\prod_{b\in\mathcal B(c)}(1-x_b)}
                   {\prod_{b\in N_1(c)}(1-2p_b)}
 \le t_c.
\]
By Theorem~\ref{thm:conditional-lll}, uniformly over rooted subformulas,
\begin{equation}\label{eq:asym-count-interval}
 0\le\beta_{J,c}\le\overline t_c\le t_c<1/4.
\end{equation}
Only $\overline t_c$, not $t_c$, is used by the algorithm.

\subsection{A weighted stability bound}

Put $q=3/4$, $\rho=q^{-1}=4/3$, and $a_0=8/3$. For a family $P=(A,B)$,
write its summand in~\eqref{eq:asym-count-recursion} as
\[
 f_P(v)=(-1)^{|A|-1}\lambda_{A,B}\prod_{i=1}^s(1-v_i)^{-1},
 \qquad
 H_P=\lambda_{A,B}\prod_{i=1}^s(1-a_0t_{d_i})^{-1}.
\]
All the inflated denominators are positive. For every child vector
$0\le v_i\le\overline t_{d_i}$, we have $|f_P(v)|\le H_P$ and
\begin{equation}\label{eq:asym-count-derivative}
 \sum_{i=1}^s t_{d_i}|\partial_i f_P(v)|\le\frac{H_P}{a_0-1}
 =\frac35 H_P.
\end{equation}
Indeed, $\partial_i f_P(v)=f_P(v)/(1-v_i)$, so
$|\partial_i f_P(v)|$ is nondecreasing in every nonnegative coordinate.
Since $\overline t\le t$, the left side is at most its value at
$(t_{d_i})_i$, namely
$\lambda_{A,B}\prod_i(1-t_{d_i})^{-1}
\sum_i t_{d_i}/(1-t_{d_i})$. Expanding the nonnegative product shows
\[
 \prod_i\frac{1-t_{d_i}}{1-a_0t_{d_i}}
 =\prod_i\left(1+\frac{(a_0-1)t_{d_i}}{1-a_0t_{d_i}}\right)
 \ge1+(a_0-1)\sum_i\frac{t_{d_i}}{1-t_{d_i}}.
\]
This proves~\eqref{eq:asym-count-derivative}, also when the coefficient
or the sum is zero. The extra room in the inflated factors pays for
all the child derivatives, rather than just the size of the term.

The same deleted-region inclusion as in Section~\ref{sec:asymmetric}
gives $H_P\le\prod_{b\in A}g_b$, where
\[
 g_c=p_c\prod_{b\in\mathcal B(c)}(1-a_0t_b)^{-1}.
\]
The elementary identity
\[
 (1-a_0Cu)-(1-u)(1+\rho Cu)
 =(1-4C)u+\rho Cu^2\ge0\qquad(0\le u\le1)
\]
and the product hypothesis yield
$\rho g_c\prod_{b\in N_2(c)}(1+\rho t_b)\le\rho t_c$.
Apply~\eqref{eq:asym-tree-comparison} with node weights $\rho g_c$
and bounds $\rho t_c$. For every rooted subformula,
\begin{equation}\label{eq:asym-count-envelope}
 \sum_{P\in\Pairs(J,c)}q^{-|A(P)|}H_P\le\frac43t_c,
 \qquad
 \sum_{P\in\Pairs(J,c)}q^{-|A(P)|}
       \sup_v\sum_i t_{d_i}|\partial_i f_P(v)|\le\frac45t_c.
\end{equation}
The supremum in each term is over its own child box. The first bound
controls omitted terms. The second is a contraction when the error
of a child rooted at $d$ is measured in units of $t_d$; an unweighted
maximum error would discard this useful dependence on the root.

\subsection{The bounded recursion and its accuracy}

Set $\widehat\beta_0(J,c)=0$. At integer budget $\ell\ge1$, evaluate
\eqref{eq:asym-count-recursion} using only families of size
$r=|A|\le\ell$, with child value
$\widehat\beta_{\ell-r}(J_i,d_i)$. Sum the signed terms exactly and clip
the result to $[0,\overline t_c]$: replace a value $v$ by
$\min\{\overline t_c,\max\{0,v\}\}$, leaving values inside the interval
unchanged and moving values outside it to the nearest endpoint.
Round the clipped value down to the grid with spacing $p_c2^{-b_\ell}$,
where $b_\ell$ is the least integer satisfying
$2^{-b_\ell}\le q^\ell/8$. This computation uses only the known
probabilities, generated pairs, and local coefficients.

We claim, simultaneously for all rooted subformulas, that
\begin{equation}\label{eq:asym-count-error}
 |\widehat\beta_\ell(J,c)-\beta_{J,c}|\le8t_cq^\ell.
\end{equation}
At budget zero this follows from~\eqref{eq:asym-count-interval}.
For a retained size-$r$ family, the induction hypothesis and the
mean-value formula bound its error by $8q^{\ell-r}$ times the
weighted derivative in~\eqref{eq:asym-count-derivative}. The segment
between the two child vectors stays in the prescribed box. The omitted
terms have absolute sum at most $\sum_{P:\,|A(P)|>\ell}H_P$.
Since the true marginal lies in $[0,\overline t_c]$, clipping cannot
increase the error. Rounding contributes at most
$p_cq^\ell/8\le t_cq^\ell/8$. By~\eqref{eq:asym-count-envelope},
\[
 |\widehat\beta_\ell(J,c)-\beta_{J,c}|
 \le t_cq^\ell\left(\frac18+\frac43+8\cdot\frac45\right)
 <8t_cq^\ell.
\]
This accounts for rounding, omitted terms, and every retained child.
The constants in this estimate are uniform over $0<C<1/4$.

For an arbitrary $K\subseteq H$, list its members as
$d_1,\ldots,d_j$ in the inherited order and put
$K_i=\{d_1,\ldots,d_i\}$. Return
\[
 L_{\rm cnt}=\left\lceil\log_{4/3}\frac{16m}{\delta}\right\rceil,
 \qquad
 \widehat Z(K)=D^n\prod_{i=1}^j
       \bigl(1-\widehat\beta_{L_{\rm cnt}}(K_i,d_i)\bigr).
\]
The exact product is $Z(K)$. Every exact and estimated satisfaction
factor is greater than $3/4$, so
\[
 \left|\log\frac{\widehat Z(K)}{Z(K)}\right|
 \le\frac43\sum_{i=1}^j8t_{d_i}q^{L_{\rm cnt}}
 <\frac83m q^{L_{\rm cnt}}\le\frac\delta6.
\]
For $0<\delta\le1/2$ this implies the stated relative error. 

\subsection{Running time and the symmetric corollary}

All graph bounds remain those of $H$. At a recursive call, the
size-$r$ candidate count is at most $(4\Delta_2)^{r-1}$ by
Appendix~\ref{app:enumeration}; evaluating $\lambda_{A,B}$ takes at
most $D^{k(\Delta+1)r}$ local assignments; and the term makes at most
$Qr$ child calls, each at budget $\ell-r$. These bounds give total
work exponential in the budget, with base
\[
 (Q+1)(\Delta_2+1)^{O(1)}D^{O(k(\Delta+1))},
\]
apart from an input-polynomial factor. To see the recursion bound,
put $A_*=4\Delta_2$. For $R>A_*$, the normalized cost of
recursive children satisfies
\[
 \sum_{r\ge1}QrA_*^{r-1}R^{-r}
 =\frac{Q/R}{(1-A_*/R)^2}
 \le\frac29<\frac12
 \qquad\bigl(R\ge\max\{4A_*,8Q\}\bigr).
\]
Choose $R$ also at least a sufficiently large constant multiple of the
local-work base above. Induction then bounds the call cost by an
input-polynomial factor times $R^\ell$.

The same argument includes rational arithmetic. Each returned value
is a multiple of $p_c2^{-b_\ell}$ between zero and $2p_c$, and therefore
has $O(k\log D+\ell)$ bits. A size-$r$ term contains at most $Qr$
reciprocal factors. The denominator of an exact sum can have bit length
proportional to the number of terms times their individual bit bound.
Summing all terms exactly still costs a fixed polynomial in their
number and bit lengths, absorbed by the constant power of
$\Delta_2+1$ above; only the rounded value is returned to the parent.
Since $L_{\rm cnt}=O(\log(m/\delta))$,
$Q\le1+\Delta^2$, and $\Delta_2\le\Delta(\Delta-1)$, the $j$ marginal
evaluations and their final product have the claimed complexity.
All constants in these work and error bounds are absolute; they do not
depend on $1/4-C$. This proves Theorem~\ref{thm:asymmetric-counting}.

The same counter applies to every deletion query. At accuracy
$\varepsilon/(32m)$, it supplies the oracle of
Theorem~\ref{thm:asymmetric}, giving total sampling time
$(n+m/\varepsilon)^{O_C(k\Delta\log D)}$ under the product condition.

It remains to prove Theorem~\ref{thm:symmetric-complete}. The maximum of
$u(1-u)^Q$ for $0<u<1$ is
\[
 b_Q:=\frac{Q^Q}{(Q+1)^{Q+1}},
 \qquad\text{attained at }u=\frac1{Q+1}.
\]
Under its hypothesis $p<b_Q/4$, choose any $C$ with
$p/b_Q<C<1/4$ and set $x_c=1/(Q+1)$ for every $c$. Since
$|\mathcal B(c)|\le Q$, these witnesses satisfy the product condition.
Theorem~\ref{thm:asymmetric-counting} therefore gives counting for every
deletion subinstance with its uniform running-time bound.

For sampling, use Theorem~\ref{thm:oracle}, rather than
Theorem~\ref{thm:asymmetric}, because
\[
 8pQ<2\left(\frac{Q}{Q+1}\right)^{Q+1}<\frac2{\mathrm e}<1.
\]
Its cutoff and complexity do not depend on the chosen $C$. Substituting
the new counter at accuracy $\varepsilon/(32m)$ into its counting-call
bound gives the total time claimed in
Theorem~\ref{thm:symmetric-complete}. This argument uses only
$C<1/4$, even arbitrarily close to the boundary.

For the simpler sufficient condition stated in the introduction, note
that
\[
 \frac{Q^Q}{4(Q+1)^{Q+1}}
 =\frac1{4(Q+1)}\left(\frac Q{Q+1}\right)^Q
 >\frac1{4\mathrm e(Q+1)}.
\]
Thus $4\mathrm e p(Q+1)\le1$ suffices. Also
$Q+1\le\Delta^2+2\le(\Delta+1)^2$, so the previously available
symmetric regime is included. When $Q\sim\Delta^2$, the leading
constant is still $4\mathrm e$; this corollary refines the neighborhood
bound, rather than the worst-case leading constant of
\cite{CountingLLL}.

\clearpage
\normalsize
\appendix
\section{Supporting estimates}\label{app:implementation}

The remaining details are the numerical bounds used in the tail estimate, generation of the retained mutations, and construction of the fallback assignment.

\subsection{Numerical estimates for the tail bound}\label{app:tailconstants}

We verify~\eqref{eq:theta}. Write $a_0=3/20$, so $t=a_0/Q$. Bernoulli's inequality gives $\kappa\le(20/17)^2$, and $\mathrm e<11/4$ gives $\mathrm e\kappa<4$. For the first inequality, use $\Delta_2\le Q-1$ and, for $0<u<1$, $\log(1-u)\le-u$ and $-\log(1-u)\le u/(1-u)$:
\[
 \log\left(\frac{\Delta_2}{Q}\kappa\right)
 \le-\frac1Q+\frac{2a_0}{1-a_0/Q}
 =2a_0+\frac1Q\left(\frac{2a_0^2}{1-a_0/Q}-1\right)\le2a_0.
\]
The last bracket is negative because $2a_0^2/(1-a_0/Q)\le9/170<1$. Therefore $\theta\le\mathrm e^{13/10}/4<23/25$. For example,
\[
 \mathrm e^{13/10}\le
 \sum_{j=0}^{6}\frac{(13/10)^j}{j!}
 +\frac{(13/10)^7}{7!}\frac1{1-13/80}<\frac{92}{25}.
\]

\subsection{Enumerating retained mutations}\label{app:enumeration}

To generate the retained pairs without running the full branching
procedure, enumerate rooted ordered trees of size $r\le L$ and their maps into the graph
joining constraints at distance exactly two in $G$, fixing the root at the inserted
constraint. There are at most $4^{r-1}\Delta_2^{r-1}$ such
maps. Keep distinct injective images inside the prefix that are
independent in $G$. For each image $A$, replay the vulnerability
rule, expanding exactly at members of $A$, and keep the pair if the
expansion set is precisely $A$. Lemma~\ref{lem:structure} shows that
this generates all retained pairs in $\operatorname{poly}(n,m,L)P$ time.

\subsection{Constructing the fallback assignment}\label{app:fallback}

Under $8pQ\le1$, the algorithm computes $x_*$ once, before the trials, using the symmetric consequence
of~\cite[Proposition~2.3(1), Theorem~3.5]{Harris} stated below.
Let $p_c$ be the violation probability of constraint $c$ under independent
uniform variable values, and let $\mathsf T_{\rm loc}$ bound the time
needed to compute that probability after fixing some variables in its
scope. If $0<a_{\rm H}<1/2$ and
$\mathrm e(\Delta+1)\max_c p_c^{1-a_{\rm H}}\le1$,
Harris's algorithm finds a satisfying assignment in
$(n+m)^{O(1/a_{\rm H})}D\,\mathsf T_{\rm loc}$ time.
Indeed, applying the cited proposition to the vector
$(p_c^{1-a_{\rm H}})_c$ gives $W_{a_{\rm H}}\le\mathrm e m$ in the
notation of the cited algorithmic theorem. Substitution in its runtime
$W_{a_{\rm H}}^{O(1/a_{\rm H})}nD\,\mathsf T_{\rm loc}$ yields the
stated bound.

Compute the $p_c$ by enumerating local tables; they are positive by the nontriviality assumption. Take
$a_{\rm H}=1/(4k\lceil\log_2D\rceil)$. Each $p_c$ is at least
$D^{-k}$, so
\[
 p_c^{1-a_{\rm H}}\le2^{1/4}p_c,
 \qquad
 \mathrm e(\Delta+1)\max_c p_c^{1-a_{\rm H}}
 \le\frac{\mathrm e2^{1/4}}8<1.
\]
Here $\Delta+1\le Q$. Each local conditional probability requires
at most $D^k$ predicate evaluations and polynomial input-preparation
work. Thus the finder costs
$(n+m)^{O(k\log D)}D^{O(k)}$, makes no counting calls, and fits the
claimed additional-work bound.


\begin{thebibliography}{99}
\small
\bibitem{JVV}
M.~R. Jerrum, L.~G. Valiant, and V.~V. Vazirani.
Random generation of combinatorial structures from a uniform distribution.
\emph{Theoretical Computer Science} 43 (1986), 169--188.

\bibitem{CountingLLL}
H.~Liu, C.~Wang, Y.~Yin, Y.~Zhang, and C.~Zhou.
A counting Lov\'asz Local Lemma. arXiv:2608.08616v1, 9 August 2026.

\bibitem{Alon}
N.~Alon. A parallel algorithmic version of the local lemma.
\emph{Random Structures \& Algorithms} 2(4) (1991), 367--378.

\bibitem{WangYin}
C.~Wang and Y.~Yin. A sampling Lov\'asz local lemma for large domain sizes.
In \emph{FOCS} (2024), 129--150. arXiv:2307.14872.

\bibitem{HSS}
B.~Haeupler, B.~Saha, and A.~Srinivasan.
New constructive aspects of the Lov\'asz Local Lemma.
\emph{Journal of the ACM} 58(6) (2011), Article 28.

\bibitem{Harris}
D.~G. Harris. Deterministic algorithms for the Lov\'asz Local Lemma:
simpler, more general, and more parallel.
\emph{Random Structures \& Algorithms} 63(3) (2023), 716--752;
conference version in \emph{SODA} (2022).
We use arXiv:1909.08065v6 for the numbered statements.

\bibitem{Moitra}
A.~Moitra.
Approximate counting, the Lov\'asz local lemma, and inference in graphical models.
\emph{Journal of the ACM} 66(2), Article 10 (2019); conference version in \emph{STOC} (2017).
\href{https://arxiv.org/abs/1610.04317}{arXiv:1610.04317}.

\bibitem{BGGGS}
I.~Bez\'akov\'a, A.~Galanis, L.~A.~Goldberg, H.~Guo, and D.~\v{S}tefankovi\v{c}.
Approximation via correlation decay when strong spatial mixing fails.
\emph{SIAM Journal on Computing} 48(2), 279--349 (2019).
\href{https://arxiv.org/abs/1510.09193}{arXiv:1510.09193}.

\bibitem{EL}
P.~Erd\H{o}s and L.~Lov\'asz.
Problems and results on 3-chromatic hypergraphs and some related questions.
In \emph{Infinite and Finite Sets}, Colloquia Mathematica Societatis J\'anos Bolyai 10 (1975).

\bibitem{Shearer}
J.~B.~Shearer.
On a problem of Spencer.
\emph{Combinatorica} 5(3), 241--245 (1985).

\bibitem{Beck}
J.~Beck.
An algorithmic approach to the Lov\'asz local lemma. I.
\emph{Random Structures \& Algorithms} 2(4), 343--365 (1991).

\bibitem{Moser}
R.~A.~Moser.
A constructive proof of the Lov\'asz local lemma.
In \emph{STOC} (2009).
\href{https://arxiv.org/abs/0810.4812}{arXiv:0810.4812}.

\bibitem{MT}
R.~A.~Moser and G.~Tardos.
A constructive proof of the general Lov\'asz local lemma.
\emph{Journal of the ACM} 57(2), Article 11 (2010).
\href{https://arxiv.org/abs/0903.0544}{arXiv:0903.0544}.

\bibitem{HarrisMT}
D.~G.~Harris.
New bounds for the Moser--Tardos distribution.
\emph{Random Structures \& Algorithms} 57(1), 97--131 (2020).
\href{https://arxiv.org/abs/1610.09653}{arXiv:1610.09653}.

\bibitem{GJL}
H.~Guo, M.~Jerrum, and J.~Liu.
Uniform sampling through the Lov\'asz local lemma.
\emph{Journal of the ACM} 66(3), Article 18 (2019); conference version in \emph{STOC} (2017).
\href{https://arxiv.org/abs/1611.01647}{arXiv:1611.01647}.

\bibitem{QWZ}
G.~Qiu, Y.~Wang, and C.~Zhang.
A perfect sampler for hypergraph independent sets.
In \emph{ICALP} (2022).
\href{https://arxiv.org/abs/2205.02050}{arXiv:2205.02050}.

\bibitem{GLLZ}
H.~Guo, C.~Liao, P.~Lu, and C.~Zhang.
Counting hypergraph colorings in the local lemma regime.
\emph{SIAM Journal on Computing} 48(4), 1397--1424 (2019).
\href{https://arxiv.org/abs/1711.03396}{arXiv:1711.03396}.

\bibitem{FGYZ}
W.~Feng, H.~Guo, Y.~Yin, and C.~Zhang.
Fast sampling and counting $k$-SAT solutions in the local lemma regime.
\emph{Journal of the ACM} 68(6), Article 40 (2021).
\href{https://arxiv.org/abs/1911.01319}{arXiv:1911.01319}.

\bibitem{FHY}
W.~Feng, K.~He, and Y.~Yin.
Sampling constraint satisfaction solutions in the local lemma regime.
In \emph{STOC}, 1565--1578 (2021).
\href{https://arxiv.org/abs/2011.03915}{arXiv:2011.03915}.

\bibitem{JPVGeneral}
V.~Jain, H.~T.~Pham, and T.-D.~Vuong.
Towards the sampling Lov\'asz local lemma.
In \emph{FOCS} (2021).
\href{https://arxiv.org/abs/2011.12196}{arXiv:2011.12196}.

\bibitem{JPVAtomic}
V.~Jain, H.~T.~Pham, and T.-D.~Vuong.
On the sampling Lov\'asz local lemma for atomic constraint satisfaction problems.
\href{https://arxiv.org/abs/2102.08342}{arXiv:2102.08342}, 2021.

\bibitem{HWYSampling}
K.~He, C.~Wang, and Y.~Yin.
Sampling Lov\'asz local lemma for general constraint satisfaction solutions in near-linear time.
In \emph{FOCS}, 147--158 (2022).
\href{https://arxiv.org/abs/2204.01520}{arXiv:2204.01520}.

\bibitem{HWYCounting}
K.~He, C.~Wang, and Y.~Yin.
Deterministic counting Lov\'asz local lemma beyond linear programming.
In \emph{SODA}, 3388--3425 (2023).
\href{https://arxiv.org/abs/2212.14847}{arXiv:2212.14847}.

\bibitem{BDK}
M.~Bordewich, M.~Dyer, and M.~Karpinski.
Path coupling using stopping times and counting independent sets and colorings in hypergraphs.
\emph{Random Structures \& Algorithms} 32(3), 375--399 (2008).
\href{https://arxiv.org/abs/math/0501081}{arXiv:math/0501081}.

\bibitem{HSZ}
J.~Hermon, A.~Sly, and Y.~Zhang.
Rapid mixing of hypergraph independent sets.
\emph{Random Structures \& Algorithms} 54(4), 730--767 (2019).
\href{https://arxiv.org/abs/1610.07999}{arXiv:1610.07999}.

\end{thebibliography}
\end{document}